\documentclass[10pt,twocolumn,aps,pra,superscriptaddress]{revtex4-1}

\usepackage{graphicx}
\usepackage{color}
\usepackage{xcolor}
\usepackage{amssymb}
\usepackage{amsmath}
\usepackage{bm}
\usepackage[american]{babel}
\usepackage{hyperref}
\usepackage{braket}
\usepackage[T1]{fontenc} 
\usepackage{txfonts}
\usepackage{comment}
\usepackage{textcomp}
\usepackage[ruled,vlined]{algorithm2e}

\DeclareGraphicsExtensions{.pdf,.png,.jpg}
\hypersetup{colorlinks=true,urlcolor=teal,breaklinks=true,citecolor=teal,linkcolor=teal,pdfstartview=FitH,pdfpagemode=UseNone}


\newtheorem{lemma}{Lemma} 
\newcommand{\qed}{\hfill $\square$} 

\begin{document}

\title{Optimal Decoder for the Error Correcting Parity Code}

\author{Konstantin Tiurev}
\affiliation{Parity Quantum Computing Germany GmbH, 20095 Hamburg, Germany}
\author{Christophe Goeller}
\affiliation{Parity Quantum Computing Germany GmbH, 20095 Hamburg, Germany}
\affiliation{Parity Quantum Computing France SAS, 75016 Paris, France}
\author{Leo Stenzel}
\affiliation{Parity Quantum Computing Germany GmbH, 20095 Hamburg, Germany}
\author{Paul Schnabl}
\affiliation{Institute for Theoretical Physics, University of Innsbruck, A-6020 Innsbruck, Austria}
\author{Anette Messinger}
\affiliation{Parity Quantum Computing GmbH, A-6020 Innsbruck, Austria}
\author{Michael Fellner}
\affiliation{Parity Quantum Computing GmbH, A-6020 Innsbruck, Austria}
\author{Wolfgang Lechner}
\affiliation{Parity Quantum Computing Germany GmbH, 20095 Hamburg, Germany}
\affiliation{Parity Quantum Computing France SAS, 75016 Paris, France}
\affiliation{Institute for Theoretical Physics, University of Innsbruck, A-6020 Innsbruck, Austria}
\affiliation{Parity Quantum Computing GmbH, A-6020 Innsbruck, Austria}

\date{\today}

\begin{abstract}
We present a two-step decoder for the parity code and evaluate its performance in code-capacity and faulty-measurement settings. For noiseless measurements, we find that the decoding problem can be reduced to a series of repetition codes while yielding near-optimal decoding for intermediate code sizes and achieving optimality in the limit of large codes. 
In the regime of unreliable measurements, the decoder demonstrates fault-tolerant thresholds above 5\% at the cost of decoding a series of independent repetition codes in ${(1+1)}$ dimensions. Such high thresholds, in conjunction with a practical decoder, efficient long-range logical gates, and suitability for planar implementation, position the parity architecture as a promising candidate for demonstrating quantum advantage on qubit platforms with strong noise bias.
\end{abstract}

\maketitle

\section{Introduction} \label{sec:introduction}

Quantum computers have the potential to solve certain types of computational problems with polynomial or even super-polynomial speedups compared to their classical counterpart~\cite{9781107002173, dalzell2023quantumalgorithmssurveyapplications,doi:10.1137/S0097539795293172}. However, the inherent fragility of quantum information makes quantum algorithms highly susceptible to noise. To counter this, quantum error correction~(QEC)~\cite{RevModPhys.87.307, QEC2,Devitt_2013, doi:10.1080/00107514.2019.1667078} serves as an indispensable building block for enabling the construction of scalable and fault-tolerant quantum computers in the medium to long term. 

In current physical implementations of quantum devices, typically only direct interactions between nearest-neighbor qubits are permitted due to the planar layout. Any gate between distant qubits requires additional operations used to bring these qubits in direct contact
~\cite{PRXQuantum.4.010313,7059001,childs_et_al:LIPIcs.TQC.2019.3,Saeedi2011}. The limited connectivity between qubits imposes stringent constraints on the design of quantum algorithms and error correcting codes~\cite{Kitaev-AnnPhys-2003,TopologicalQuantumMemory,PhysRevLett.108.180501,PhysRevA.86.032324,PhysRevLett.97.180501,LandahlColor,Kubica2018,Bombin_2015}. The parity code we consider in this work offers an alternative solution for implementing long-range logical gates by allowing for any desired logical connectivity from nearest-neighbor physical interactions only~\cite{PhysRevLett.129.180503,doi:10.1126/sciadv.1500838,klaver2024swaplessimplementationquantumalgorithms, messinger2023}. Being an instance of low-density parity codes~(LDPCs)~\cite{panteleev2022asymptoticallygoodquantumlocally}, the parity code provides intrinsic protection against either phase-flip or bit-flip errors. Hence, both error correction and long-range logical gates can be realized via local operations only. When constructed upon qubits with strong noise bias characteristics coupled to bias-preserving gates, such as cat qubits~\cite{PhysRevX.9.041053,PRXQuantum.2.030345,PRXQuantum.2.030345,doi:10.1126/sciadv.aay5901,PhysRevResearch.4.013082}, or concatenated with, e.g., a repetition code, the parity architecture offers low-depth fault-tolerant implementations of certain quantum algorithms~\cite{messinger2024faulttolerantquantumcomputingparity}. 

Fault-tolerant architectures must be designed in conjunction with classical co-processors that decode quantum error correction measurement information in real-time~\cite{RevModPhys.87.307}. The complexity of decoding algorithms is crucial, as the quantum state is vulnerable to accumulating additional errors during the decoder operation as well as to avoid the backlog problem. Typically, improving decoding performance comes at the cost of increased computational complexity, creating a trade-off between accuracy and runtime. To maintain the fragile quantum information, decoders must be both highly accurate and sufficiently fast to correct errors before further accumulation of errors or complete decoherence occurs. The optimal decoder aims to infer the most probable occurred error from the syndrome with maximal success probability. However, achieving this for large codes is not viable since the complexity of decoding a quantum code is typically a $\#P$-complete problem~\cite{9456887,7097029}, and various approximate methods have to be developed to address this scalability challenge. The list includes, but is not limited to, minimum weight perfect matching~(MWPM)~\cite{10.1063/1.1499754}, union-find~\cite{Delfosse_2021,PhysRevA.102.012419,wu2022interpretationunionfinddecoderweighted}, maximum-likelihood~\cite{PhysRevA.90.032326,PhysRevX.9.041031}, belief-propagation~\cite{wolanski2025ambiguityclusteringaccurateefficient}, and neural-network~\cite{PhysRevLett.119.030501,PhysRevLett.128.080505} decoders. Previously, the error correcting capabilities of the parity architecture have only been studied in a quantum annealing setting under ideal measurements and a readout decoder~\cite{PhysRevA.93.052325, nambu2024errorcorrectionencodedquantum,nambu2024errorcorrectionparityencodingbasedannealing}. However, a universal parity-based quantum processor of Ref.~\cite{PhysRevLett.129.180503} requires non-destructive stabilizer measurements and a decoder capable of resolving the observed syndrome.

In this paper, we thoroughly investigate symmetrical and statistical properties of the parity code in the presence of noise. Using these properties, we construct a two-step parity decoder of the noise syndrome. We start by identifying one-dimensional materialized symmetries of the code, leading to the syndrome parity conservation law. A correction operator compatible with the syndrome is then constructed by pairing defects along the respective code symmetries, which constitutes the first step of the parity decoder. The second step of the decoder aims to find the most probable correction among all candidates by direct minimization. Strikingly, we show that both steps of the decoder can be reduced to a series of repetition codes, while yielding near-optimal decoding capabilities for intermediate-size codes and achieving optimality in the limit of large code distances $d$. Similarly, in the noisy measurements setting, the decoder exhibits the same computational complexity as decoding $d$ independent repetition codes in $(1+1)$ dimensions and achieves a fault-tolerant threshold which monotonically increases with the code size and saturates above 5\%. 

The remainder of this paper is organized as follows. In Sec.~\ref{sec:parity-code}, we outline the parity code for universal quantum computing. In
Sec.~\ref{sec:symmetries}, we identify the materialized symmetries of the code, which allows us to construct matching graphs for pairing the code defects. In Sec.~\ref{sec:statistical-properties}, we investigate the statistical properties of the logical operators of the code and prove that the parity code has an optimal threshold of 50\% under ideal measurements in the limit of large codes. In Sec.~\ref{sec:1-line-decoder}, we show that optimal decoding can be achieved within runtime scaling linearly with the code size when large codes are considered. Using the properties of the parity code investigated in previous sections we construct the two-step parity decoder in Sec.~\ref{sec:decoder} and numerically simulate its performance in Sec.~\ref{sec:code-capacity}. In Sec.~\ref{sec:faulty-measurements}, the decoder is adapted for and benchmarked in the regime of noisy measurements. Complexity, runtime, and parallelization of the decoder are discussed in Sec.~\ref{sec:complexity}. We review the results and conclude with future perspectives in Sec.~\ref{sec:outlook}.

\section{The parity code} \label{sec:parity-code}

The parity code~\cite{PhysRevLett.129.180503} is a LDPC code where the stabilizers are chosen such that certain connectivity between logical qubits is obtained. We focus on the parity code in the modified Lechner--Hauke--Zoller~(LHZ) configuration~\cite{PhysRevLett.129.180503,doi:10.1126/sciadv.1500838} illustrated in Fig.~\ref{fig:parity-code}. The code is formed by physical qubits placed at the vertices of a square lattice with triangular boundaries. The codespace is stabilized by weight-4~(weight-3 along the bottom boundary) operators
$G_i = \prod_{v \in \partial f_i} Z_v$
associated with the faces, or plaquettes, of the lattice. Here, $\partial f_i$ denotes all qubits in the support of plaquette $f_i$ and $Z_v$ denotes a Pauli-$Z$ operator acting on qubit $v$. A computational space is spanned by the simultaneous +1 eigenstate of all stabilizers, $
    \ket{\Psi}
    =
    (+1) G_i \ket{\Psi}$. 
A single Pauli-$X$ error on a data qubit anticommutes with all stabilizers $G_i$ it has support on, that is, $G_i X_v = (-1)X_v G_i$ if ${v \in \partial f_i}$. We will refer to individual stabilizers flipped by an error as defects. Collectively, the measured defects will be referred to as syndrome. The task of a decoder is to deduce the most probable configuration of errors compatible with the observed syndrome. As all stabilizers are Pauli-$Z$ products, the parity code alone corrects only bit-flip errors and has to be either constructed upon physical qubits with strong intrinsic noise bias~\cite{messinger2024faulttolerantquantumcomputingparity} or concatenated with another error correcting code. As introduced in 
Ref.~\cite{messinger2024faulttolerantquantumcomputingparity}, the code admits fault-tolerant implementations of a full universal gate set for logical qubits in the presence of infinite-biased noise. In the remainder of this article, we consider the regime of bit flips only. 

\begin{figure}[t]
\includegraphics[width=0.95\columnwidth]{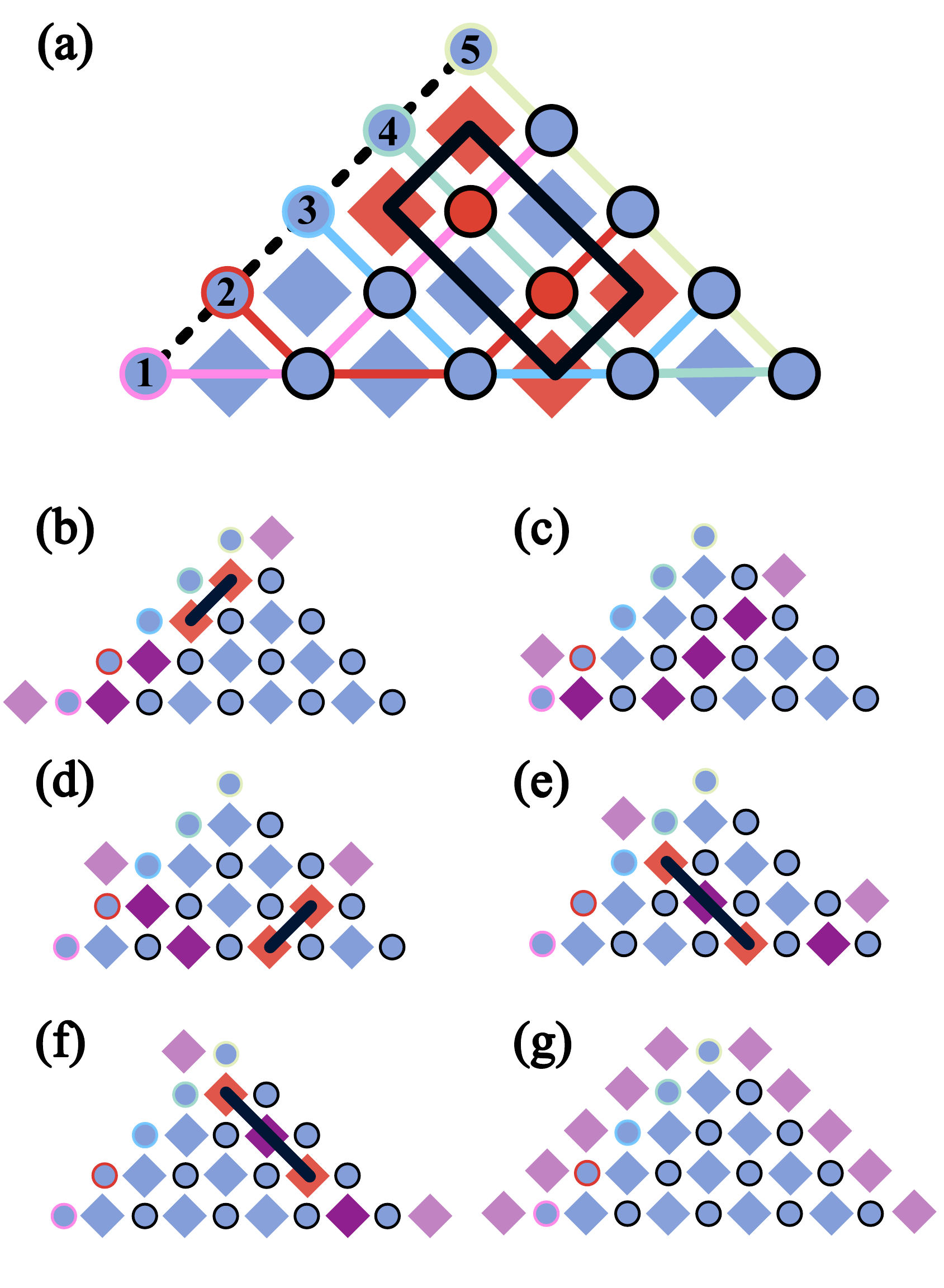}
\caption{\label{fig:parity-code} (a)~Distance-5 LHZ code. Qubits~(circles) lie on the vertices of the square lattice. Plaquettes correspond to stabilizers measuring the parity of the surrounding qubits in the $Z$ basis. The code contains five logical qubits. Logical Pauli-$Z$ operators correspond to single-qubit Pauli-$Z$ operators applied to base qubits labelled 1 to 5. Base qubits are arranged along the dashed line, which we refer to as qubit line 0. Logical Pauli-$X$ operators are formed by products of $X$ operators applied to ${d=5}$ qubits along solid lines of the same color, which we refer to as qubit lines 1 to $d$. A syndrome~(red plaquettes) triggered by qubit errors~(red circles) forms a closed contour~(solid black line) that encloses the flipped qubits. (b)--(f)~Symmetries of the distance-5 LHZ code formed by stabilizers shown in purple and red. Each qubit is in the support of either two stabilizers of a given symmetry or none. Stabilizers triggered along each symmetry by the error of panel~(a) are shown in red and connected by the shortest-path matching~(black lines). Virtual stabilizers shown with pale purple plaquettes are appended to ends of each symmetry line in (b)--(f) and form a virtual symmetry~(g). Combined together, pairings of panels (b)--(g) form a cluster of panel (a) enclosing the error. In panels (b)--(g), qubit lines are omitted to avoid cluttering.}
\end{figure}

In the LHZ layout, a distance-$d$ parity code encodes $d$ logical qubits in ${n = d(d+1)/2}$ physical data qubits using $n-d$ independent stabilizers. Logical Pauli-$Z$ operators are associated with single-qubit Pauli-$Z$ operators on $d$ data qubits, which we refer to as \emph{base qubits}~\cite{note:change-of-notations}. The remaining ${n-d}$ data qubits are referred to as \emph{parity qubits}. Physical Pauli-$Z$ operators on parity qubits translate to products of logical Pauli-$Z$ operators. Each logical Pauli-$X$ operator has support on one base qubit and ${d-1}$ parity qubits, as illustrated in Fig.~\ref{fig:parity-code}~(a). For the ease of further discussion, we will refer to a string formed by the base qubits as qubit line 0 and to the strings of qubits along Pauli-$X$ operators as qubit lines 1 to $d$. Hence, for a distance-$d$ code, one has $(d+1)$ qubit lines of length $d$, with each pair of qubit lines having exactly one physical qubit in their common support. 

The modified LHZ configuration we consider in this work differs from the universal parity code proposal of Ref.~\cite{PhysRevLett.129.180503} by a deformation applied to the logical operators and stabilizers. Such deformation does not affect the properties of the code on a logical level, however, allows to find one-dimensional \emph{materialized symmetries}~\cite{10048521} of the parity code. In the following sections, we introduce a matching decoder by exploiting underlying conservation laws that arise due to the symmetries of the parity code.

\section{Materialized symmetries of the parity code} \label{sec:symmetries}

We define a materialized symmetry $\mathcal{S}_k$ as a set of stabilizers whose product is the identity operator, 
\begin{equation}\label{eq:symmetry}
    \prod_{G_i \in \mathcal{S}_k} G_i = 1.
\end{equation}
Any symmetry leads to a conservation law for the syndrome~\cite{10048521}. According to  definition~\eqref{eq:symmetry}, the product of all stabilizer measurements along the materialized symmetry must be equal to one. Under any error on physical qubits, a symmetry contains an even number of defects which can be matched in pairs, paving the way to the construction of a matching decoder. For instance, $X$ and $Z$ subsets of the toric code stabilizers form symmetries~\cite{10048521},
$        
\prod_{G_i \in \mathcal{S}_X}G_i
=
\prod_{G_i \in \mathcal{S}_Z}G_i
=
1.
$    
Hence, the syndrome information of the toric code can be decoded by matching defects within two graphs formed by stabilizers belonging to the $X$ and $Z$ symmetries~\cite{10.1063/1.1499754,edmonds_1965}. Similarly, stabilizers of any two colors in the two-dimensional color code form symmetries, allowing for a matching-based restriction decoder~\cite{PhysRevA.89.012317,chamberland2020triangular,Kubica_2023} and its more advanced versions~\cite{PRXQuantum.3.010310}. 

Topological codes can exhibit a richer variety of symmetries when a particular error model is considered. As such, the surface~\cite{PhysRevLett.124.130501,XZZX} and color~\cite{PhysRevLett.133.110601} codes under infinite-bias noise support one-dimensional symmetries, enabling high-threshold symmetry-based MWPM decoding. Since in the bulk of the code the syndrome of the $ZY$ surface code and the syndrome of the parity code behave identically under the action of Pauli-$X$ noise, a decoding principle used in Ref.~\cite{PhysRevLett.124.130501} can be employed for resolving the syndrome along symmetries of the parity code.

We say that a materialized symmetry is linear if it is supported on a
one-dimensional sub-manifold of the lattice. Matching syndromes supported on a linear symmetry is equivalent to decoding a
repetition code. Figure~\ref{fig:parity-code}~(b--g) shows the full set of linear symmetries in a distance-5 parity code. Within a symmetry, each bulk qubit is supported by two stabilizers of the code; hence defects are created in pairs along each symmetry. In contrary, a qubit at the boundary is supported by only one stabilizer along the symmetry perpendicular to the boundary. Hence, when a boundary qubit is flipped, the parity of the syndrome along such a symmetry changes, violating the conservation law of Eq.~\eqref{eq:symmetry}. In order to correctly account for errors along the boundary, we append \emph{virtual stabilizers} to the ends of each symmetry line of Figs.~\ref{fig:parity-code}(b--f). Collectively, these virtual stabilizers form an additional \emph{virtual symmetry} shown in Fig.~\ref{fig:parity-code}(g). As opposed to \emph{real stabilizers} measured during error correction, virtual stabilizers can be freely added during the decoding algorithm to maintain the conservation of the syndrome parity along all of the code symmetries. 

The parity matching decoder exploits syndrome parity conservation under the action of noise. By definition, under any multi-qubit error, defects along the symmetries are created in pairs, and can hence be matched in pairs along the respective symmetries. 
By construction, any pair of symmetries have exactly one common stabilizer, and any stabilizer belongs to exactly two symmetries. Therefore, when combined in pairs along their respective symmetries, each defect of the syndrome will be paired to two defects along the symmetries it belongs two. Collectively, such pairings form a closed contour, which we will refer to as a cluster. 

Any error compatible with the observed syndrome will correspond to the interior of a cluster. To see this, assume a toy example of a parity code containing a single-qubit bit flip. Such an error flips four nearby stabilizers, that is, creates two defects along each of the four symmetries surrounding the qubit. Connecting pairs of defects along each of the symmetries will form a cluster containing a flipped qubit as its interior. Subsequent errors either create new clusters, or deform the existing one, as exemplified in Fig.~\ref{fig:parity-code}~(a) for a two-qubit error. The problem of finding the error compatible with the syndrome therefore reduces to finding the interior of clusters formed by pairing defects along each of the codes symmetries. Examples of such pairings along different symmetries are shown in Figs.~\ref{fig:parity-code}(b)--(g). Combined together, these pairings form a cluster which encloses the error, as depicted in Fig.~\ref{fig:parity-code}(a). The process of matching syndromes along the code symmetries returns the state to the correct codespace and constitutes the first step of the parity decoder. The exact way of matching syndrome defects will affect the performance and complexity of the algorithm. A few concrete choices of a matching decoder are described in detail in Sec.~\ref{sec:decoder}. 

\section{Optimal threshold of the parity code} \label{sec:statistical-properties}

In this section we explicitly prove that the parity code has an optimal threshold of 50\% in the limit of large codes. In the process, we also derive important statistical properties of the code, which will allow us to construct a scalable decoder in the following sections.

We start by introducing a few definitions. In Sec.~\ref{sec:parity-code}, we introduced a qubit line as a size-$d$ subset of the code qubits, such that an operator formed by a product of Pauli-$X$s applied to qubits along a qubit line commutes with the code stabilizers. For convenience, we generalize this notation and refer to the symmetric difference  of any $k$ qubit lines as a $k$-line. Throughout the paper, $k$-lines, errors, and correction are considered to be sets of qubit indices. However, all our derivations can be formulated in terms of operators. As such, a 1-line operator is a product of Pauli-$X$s applied along a 1-line, a $k$-line operator is a product of $k$ 1-line operators, and correction and error operators are product of Paulis applied to the corresponding sets of qubits.

We will denote the number of $k$-lines that a distance-$d$ parity code contains as ${D(d,k) = \binom{d+1}{k}}$. The number of qubits in a $k$-line reads ${W(d,k) = k(d-k+1)}$ and will be referred to as the \emph{weight of a $k$-line}. A $k$-line will be denoted as $L_k^i$, where ${k \in [1, (d-1)/2]}$ and ${i \in [1, D(d,k)]}$. Note that considering $k$-lines with $k > (d-1)/2$ is redundant, since, by construction, any such $k$-line $L_k^i$ is equivalent to some $m$-line $L_m^j$ with ${m = d + 1 - k}$. Finally, in the following two sections we consider codes with large distances ${d \gg 1}$; for brevity, ${d \pm 1}$ is approximated with $d$ throughout the derivations wherever it has no effect on the results. 

Consider a distance-$d$ parity code in the modified LHZ layout of Fig.~\ref{fig:parity-code}. Qubits are subject to independent and identically distributed noise. Assume a $k$-line $L_k^i$. Let $\mathcal{E}$ denote a set of qubits flipped by Pauli-$X$ errors, which we refer to simply as an error. If for a given error $\mathcal{E}$ half or more of the qubits belonging to $L_k^i$ are Pauli-flipped, that is, $|\mathcal{E} \cap L_k^i| \geq W(d,k)/2$, we say that  $L_k^i$ contains a logical error and write $E_k^i(\mathcal{E}) = 1$. Otherwise, we say that $L_k^i$ is error-free and write $E_k^i(\mathcal{E}) = 0$. When it does not cause confusion, we will write simply $E_k^i = 1$ and $E_k^i = 0$ for brevity. 
Below, we prove the following Lemma.

\begin{figure}[t]
\includegraphics[width=0.85\columnwidth]{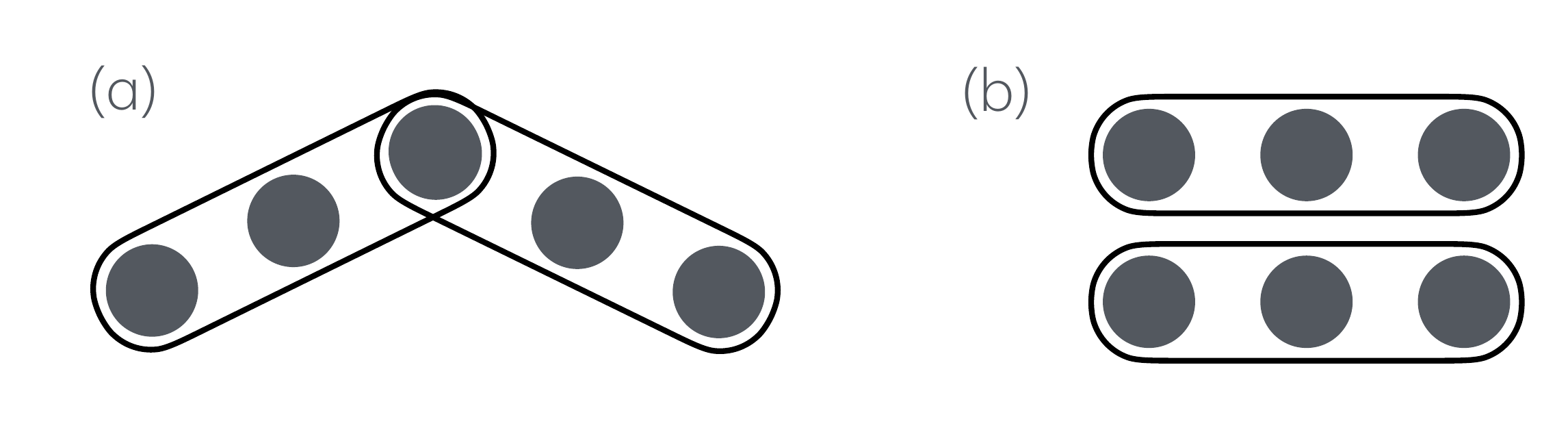}
\caption{\label{fig:toy-example} Two distance-3 codes that (a)~share one common qubit and (b)~have no common qubits.}
\end{figure}

\begin{lemma}\label{lemma:1} Given $k \in [1,d/2]$, the probability that at least one of $D(d,k)$ $k$-lines contains an error is $O{(}e^{-kd}{)}$ for $p<1/2$ and $d \rightarrow \infty$. That is,
\begin{equation}\label{eq:lemma-1}
P
\Big{(}
\sum_{i=1}^{D(d,k)}
E_k^i > 0
\Big{)}
=
O(e^{-kd})
\quad
\forall k \in [1,d/2].
\end{equation}
\end{lemma}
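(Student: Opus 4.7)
The plan is to prove Eq.~\eqref{eq:lemma-1} by stacking two standard ingredients: a Chernoff/Hoeffding tail bound for the single-$k$-line failure probability, followed by a union bound over the $D(d,k)=\binom{d+1}{k}$ distinct $k$-lines. The key geometric input is that for $k\leq d/2$ the weight $W(d,k)=k(d-k+1)$ satisfies $W(d,k)\geq kd/2$, so that the tail bound decays fast enough in $d$ to absorb the combinatorial factor.

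First I would fix a single $k$-line $L_k^i$. Under i.i.d.\ Bernoulli$(p)$ noise, the number of flipped qubits in $L_k^i$ is distributed as $\mathrm{Bin}\bigl(W(d,k),p\bigr)$, and $E_k^i=1$ exactly when this count reaches $W(d,k)/2$. Since $p<1/2$, Hoeffding's inequality yields
\begin{equation}
P(E_k^i=1) \;\leq\; \exp\!\bigl(-2(1/2-p)^2\, W(d,k)\bigr),
\end{equation}
which is exponentially small in the weight. The sharper Chernoff form $P(E_k^i=1)\leq[2\sqrt{p(1-p)}]^{W(d,k)}$ would also work and produces the same qualitative conclusion.

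Second, I would apply a union bound and estimate the binomial coefficient via $\binom{d+1}{k}\leq\bigl(e(d+1)/k\bigr)^{k}$, obtaining
\begin{equation}
P\!\Bigl(\sum_{i=1}^{D(d,k)} E_k^i > 0\Bigr) \;\leq\; \Bigl(\tfrac{e(d+1)}{k}\Bigr)^{\!k}\exp\!\bigl(-2(1/2-p)^2\, k(d-k+1)\bigr).
\end{equation}
For $k\leq d/2$ one has $d-k+1\geq d/2$, so taking logarithms and dividing by $k$ shows the right-hand side is at most $\exp\bigl(k[\log(e(d+1)/k)-(1/2-p)^2\, d]\bigr)$. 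The bracket is dominated by the term linear in $d$ for $d$ large, giving the claimed $O(e^{-kd})$ decay (with an overall constant in the exponent that depends on $p$, consistent with the paper's asymptotic use of the $O$-notation).

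The main obstacle, such as it is, is the tension between the combinatorial growth of $D(d,k)=\binom{d+1}{k}$, which for $k$ near $d/2$ is exponentially large in $d$, and the single-line Chernoff decay, which is exponentially small in $W(d,k)$. The hypothesis $k\leq d/2$ is exactly what is needed: it guarantees $W(d,k)$ stays proportional to $kd$, so the entropy term $\log\binom{d+1}{k}=O(k\log(d/k))$ is subleading compared to the exponent $(1/2-p)^2\, k(d-k+1)$. No correlation structure between the various $k$-lines (they generically share qubits) needs to be exploited; a plain union bound suffices because of this generous weight.
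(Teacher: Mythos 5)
Your proof is correct, and it reaches the paper's final bound $D(d,k)\,P(E_k^1=1)$ by a cleaner route. The paper does not use a plain union bound: it instead writes $P(\text{no line fails})$ as a product of conditional probabilities, argues via a positive-correlation inequality $P(E_k^i=0\,|\,E_k^j=0)\geq P(E_k^j=0)$ (justified only heuristically, through the posterior error probability of shared qubits) that the failure probability is upper-bounded by that of $D(d,k)$ \emph{independent} distance-$W(d,k)$ repetition codes, and then expands $1-\bigl[1-P(E_k^1=1)\bigr]^{D(d,k)}$ to leading order. Your observation that none of this is needed is exactly right: the union bound requires no independence or correlation structure, and since $1-(1-x)^{D}\leq Dx$ for $x\in[0,1]$, both routes terminate at the identical quantity $D(d,k)\,e^{-\alpha(p)W(d,k)}$ --- so your argument is strictly simpler and, arguably, more rigorous than the paper's sketched inequality (which would otherwise call for an FKG/Harris-type justification, the events $\{E_k^i=0\}$ being decreasing in the i.i.d.\ error configuration). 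You also unify what the paper splits into two regimes: where the paper treats $k\ll d$ via Stirling ($D\approx d^k/k!$) and $k=O(d)$ via the binary-entropy estimate $\ln D\approx dH(k/d)$, your single bound $\binom{d+1}{k}\leq\bigl(e(d+1)/k\bigr)^k$ together with $W(d,k)\geq kd/2$ for $k\leq d/2$ handles all $k\in[1,d/2]$ uniformly, since $\log\bigl(e(d+1)/k\bigr)=O(\log d)$ is dominated by the term linear in $d$. Finally, you correctly flag the one shared caveat: your exponent is $-c(p)\,kd$ with $c(p)=(1/2-p)^2<1$ rather than literally $-kd$, but the paper's own Eq.~\eqref{eq:small-k-asymptotic} has the same feature (its constant $\alpha(p)$ need not exceed $1$), so the $O(e^{-kd})$ notation must be read in both cases as exponential decay in $kd$ with a $p$-dependent rate; your result matches the paper's on those terms.
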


\begin{proof}
The probability that at least one $k$-line contains an error can be written as
\begin{equation}\label{eq:k-line-error}
\begin{aligned}    
    P\Big{(}\sum_{i=1}^{D(d,k)} E_k^i > 0\Big{)} 
    &=
    1 - P\Big{(}\sum_{i=1}^{D(d,k)} E_k^i = 0\Big{)}
    \\&=
    1 - P(E_k^1 = 0)
    P(E_k^2 = 0|E_k^1 = 0)
    ...
\end{aligned}
\end{equation}
where $P(\sum_{i=1}^{D(d,k)} E_k^i = 0)$ is the probability that none of $k$-lines contains an error and $P(E_k^i = 0|E_k^j = 0)$ is the conditional probability that $L_k^i$ is error-free given $L_k^j$ is error-free. Such probabilities are not independent since $L_k^i$ and $L_k^j$ have physical qubits in their common support. However, we note that 
\begin{equation}\label{eq:inequality}
    P(E_k^i = 0|E_k^j = 0)
    \geq
    P(E_k^j = 0).
\end{equation}
Indeed, since any subset of $k$-lines has at least a single qubit in their joint support, the posterior probability of that qubit to contain an error given $L_k^j$ is error-free is $p' \leq p$, where $p$ is the prior error probability of the qubit. As a simple illustration, consider two distance-3 codes in Fig.~\ref{fig:toy-example}. The probability that both codes are error-free when they share a qubit is
\begin{equation}
    P(E_1^2 = 0|E_1^1 = 0)
    P(E_1^1 = 0) = 1 - 6p^2 + 8p^3 + O(p^4), 
\end{equation}
while for two independent repetition codes the probability reads
\begin{equation}
    P^2(E_1^1 = 0) = 1 - 6p^2 + 4p^3 + O(p^4), 
\end{equation}
fulfilling inequality~\eqref{eq:inequality}. 

Therefore, the probability that at least one of $k$-lines of the distance-$d$ parity code contains an error is upper-bounded by the probability that at least one of $D(d,k)$ distance-$W(d,k)$ independent repetition codes contains an error,  
\begin{equation}\label{eq:upper-bound}
\begin{aligned}
    P\Big{(}\sum_{i=1}^{D(d,k)} E_k^i > 0\Big{)} 
    \leq
    1 - \Big{[}1 - P\Big{(}E_k^1 = 1\Big{)}\Big{]}^{D(d,k)}.
\end{aligned}
\end{equation}
The RHS of the equation describes two opposite effects. On the one hand, the probability that a single repetition code contains an error, $P{(}E_k^1 = 1{)}$, is exponentially suppressed with its size $W(d,k)$. On the other hand, the probability that at least one of $D(d,k)$ repetition codes contains an error grows with the code distance $d$. Below we show that the exponential suppression always wins, leading to the result of Eq.~\eqref{eq:lemma-1}.

The probability that a single distance-$W$ repetition code fails at $W \gg 1$ is given by the Chernoff bound 
\begin{equation}
\begin{aligned}    
    P\Big{(}E_k = 1\Big{)}
    \leq
    e^{- \alpha(p)W(d,k)}\,,
\end{aligned}
\end{equation}
with $\alpha(p) > 0$ for $p < 1/2$. 
With this, Eq.~\eqref{eq:upper-bound} simplifies to 
\begin{equation}\label{eq:upper-bound-2}
\begin{aligned}
    P\Big{(}\sum_i E_k^i > 0\Big{)} 
    \leq
    1 - \Big{[}1 - e^{-\alpha(p)W(d,k)}\Big{]}^{D(d,k)}.
\end{aligned}
\end{equation}
Next, we note that $e^{-\alpha(p)W(d,k)}D(k,d) \ll 1$ at fixed $p < 1/2$ and sufficiently large $d \gg 1$. To see this, consider two cases. At $k \ll d$, we can approximate $d!$ and $(d-k)!$ using the Stirling formula, such that
\begin{equation}\label{eq:degeneracy-small-d}
    D(d,k)
    \approx
    \binom{d}{k}
    \approx
    \sqrt{\frac{(d-k)}{d}}
    \frac{d^d}{k!(d-k)^{(d-k)}}
    \approx
    \frac{d^k}{k!}\,,
\end{equation}
and with
\begin{equation}
    W(d,k)
    \approx
    kd,
\end{equation}
we have
\begin{equation}\label{eq:small-k-asymptotic}
    D(d,k)
    e^{-\alpha(p)W(d,k)}
    \approx
    \frac{1}{k!}
    e^{k(\log{d}-\alpha(p) d)}
    =
    O(e^{-kd})
    \ll 
    1\,,
\end{equation}
when $d \gg \log{d}/\alpha(p)$. Therefore, exponential suppression of logical error rate dominates over its growth as we increase $W(d,k)$ and $D(d,k)$ with $d$.

At $k = O(d)$, one can use the Stirling approximation for $k!$ as well and write
\begin{equation}
    D(d,k)
    \approx
    \binom{d}{k}
    \approx
    \frac{(\frac{k}{d})^{-d\frac{k}{d}} (1-\frac{k}{d})^{(1-\frac{k}{d})d}}{\sqrt{2\pi d(1-\frac{k}{d})}}\,.
\end{equation}
Taking the logarithm of both sides, this yields
\begin{equation}
    \begin{aligned}
    \ln{D(d,k)}
    &\approx
    -d
    \Big{[}
    \frac{k}{d} \ln{\frac{k}{d}}
    +
    (1-\frac{k}{d}) \ln{(1-\frac{k}{d})}
    \Big{]}
    \\&-
    \frac{1}{2}
    \ln{2 \pi d (1-\frac{k}{d})}
    =
    dH(\frac{k}{d})
    +
    O(\ln{d}),
    \end{aligned}
\end{equation}
where $H(p) = -p \ln{p} - (1-p)\ln{(1-p)}$ is the binary entropy up to the change in the base of the logarithm. Since this function is upper-bounded by a constant $C = O(1)$, we have 
\begin{equation}
    \frac{\alpha(p)W(d,k)}{dH(k/d)}
    \geq
    \frac{\alpha(p)k(1 - (k-1)/d)}{C}
    \geq
    \frac{\alpha(p)}{2C}k
    \gg 1,
\end{equation}
where we use $2C/\alpha(p) \ll k = O(d)$. Hence,
\begin{equation}
    \alpha(p) W(d, k)
    -
    dH(k/d)
    \approx
    \alpha(p) W(d, k)\,,
\end{equation}
for $d \gg 1/\alpha(p)$. Again, we see that exponential suppression of the logical error rate within a repetition code dominates over the binomial growth due to the number of such codes,
\begin{equation}\label{eq:large-k-asymptotic}
    \begin{aligned}
    D(d,k)
    e^{-\alpha(p)W(d,k)}
    &\approx
    e^{-\alpha(p)W(d,k) + d H(\frac{k}{d})}
    \\&=
    O(e^{-kd})
    \ll 1,
    \end{aligned}
\end{equation}

Therefore, in both regimes $k \ll d$ and $k = O(d)$, we have
$e^{-\alpha(p)W(d,k)}D(d,k) \ll 1$, meaning that the RHS of Eq.~\eqref{eq:upper-bound-2} can be Taylor-expanded, which to the leading order yields
\begin{equation}\label{eq:1st-order-approx}
\begin{aligned}
    P\Big{(}\sum_{i=1}^{D(d,k)} E_k^i > 0\Big{)} 
    \leq
    D(d,k)e^{-\alpha(p)W(d,k)}
    =
    O(e^{-kd}).
\end{aligned}
\end{equation}

From Eqs.~\eqref{eq:small-k-asymptotic}, \eqref{eq:large-k-asymptotic}, and \eqref{eq:1st-order-approx} we conclude that for distance $d \gg \log{d}/\alpha(p)$ large enough, the probability of a logical error taking place due to all $k$-lines for a fixed $k$ is upper-bounded by $O(e^{-kd})$ for any $p<1/2$. \qed
\end{proof}

We can now prove the following Lemma.
\begin{lemma} \label{lemma:2}
The parity code has a code-capacity threshold of 50\% under an optimal decoder. 
\end{lemma}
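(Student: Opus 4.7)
The strategy is to bound the optimal decoder's failure probability by a sum of quantities controlled by Lemma~\ref{lemma:1} and then match this with the information-theoretic ceiling $p\leq 1/2$.

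First I would characterize the optimal decoder in this setting. Because every stabilizer of the parity code is a product of Pauli-$Z$s, any $X$-type operator in the stabilizer group must be the identity. Consequently two pure $X$-type errors produce the same syndrome if and only if they differ by a nontrivial logical $X$ operator, and each syndrome class contains exactly $2^d$ $X$-type configurations, one per logical coset. Under i.i.d. bit-flip noise at rate $p<1/2$ the likelihood of a configuration $\mathcal{E}$ is monotonically decreasing in $|\mathcal{E}|$, so the maximum-likelihood decoder reduces to the minimum-weight decoder returning the unique lightest $X$-configuration compatible with the observed syndrome.

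Next I would translate the failure event into the language of Lemma~\ref{lemma:1}. The nontrivial logical $X$ operators of the parity code are exactly the $k$-line operators for $k \in [1, d/2]$, since larger $k$ is redundant under the $k\leftrightarrow d+1-k$ equivalence recalled just before Lemma~\ref{lemma:1}. Replacing $\mathcal{E}$ by $\mathcal{E}\oplus L_k^i$ shifts its weight by $W(d,k)-2|\mathcal{E}\cap L_k^i|$, which is strictly negative iff $|\mathcal{E}\cap L_k^i|>W(d,k)/2$ and zero in the measure-zero tie case $|\mathcal{E}\cap L_k^i|=W(d,k)/2$. In either case, optimal-decoder failure on $\mathcal{E}$ implies $E_k^i(\mathcal{E})=1$ for at least one pair $(k,i)$.

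A union bound over $k$ together with Lemma~\ref{lemma:1} then yields
\begin{equation}
P(\text{fail}) \;\leq\; \sum_{k=1}^{\lfloor d/2 \rfloor} P\Big{(}\sum_{i=1}^{D(d,k)} E_k^i > 0\Big{)} \;=\; \sum_{k=1}^{\lfloor d/2 \rfloor} O(e^{-kd}),
\end{equation}
which is dominated by its $k=1$ term and therefore is $O(e^{-d})$ for every fixed $p<1/2$. Hence the logical failure probability vanishes as $d\to\infty$ whenever $p<1/2$. The matching upper bound $p_c\leq 1/2$ is immediate: at $p=1/2$ the $X$-sector noise is uniformly random and no decoder can outperform random guessing on any logical qubit. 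Combining the two bounds gives the optimal threshold $p_c = 1/2$.

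The main obstacle I anticipate is the two-part structural argument underlying the reduction: (i)~verifying that in this purely $Z$-stabilizer setting maximum-likelihood and minimum-weight decoding coincide, and (ii)~verifying that the $k$-line operators with $k\in[1,d/2]$ exhaust the nontrivial $X$-type logical sector, so that a failure event necessarily triggers some $E_k^i=1$ with $k$ in the range covered by Lemma~\ref{lemma:1}. Once these are in place, the remainder is a clean union bound against the exponential tail estimate already established.
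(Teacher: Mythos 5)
Your proposal takes essentially the same route as the paper: the paper's proof likewise characterizes optimal-decoder failure as the existence of some $k$-line with $E_k^i = 1$ and then applies a union bound over $k \in [1, d/2]$ together with Lemma~\ref{lemma:1} to conclude the failure probability is $O(e^{-d})$ for any $p<1/2$. The additional scaffolding you supply --- the reduction of maximum-likelihood to minimum-weight decoding for this pure-$Z$-stabilizer code, the identification of the nontrivial logical $X$ sector with the $k$-line operators under the $k \leftrightarrow d+1-k$ equivalence, and the converse observation that no decoder beats random guessing at $p=1/2$ --- is correct and merely makes explicit what the paper leaves implicit.
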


\begin{proof}
The optimal decoder finds the shortest overall correction compatible with the syndrome. Such a decoder will fail if and only if there exists at least one $k$-line $L_k^i$ such that $E_k^i > 0$. Using $P(A \cup B) \leq P(A) + P(B)$ and the results of Lemma~\ref{lemma:1}, we have
\begin{equation}\label{eq:1st-order-approx-2}
\begin{aligned}
    &P\Big{(}\sum_{k=1}^{d/2}\sum_{i=1}^{D(d,k)} E_k^i > 0\Big{)} 
    \leq
    \sum_{k=1}^{d/2}P\Big{(}\sum_{i=1}^{D(d,k)} E_k^i > 0\Big{)}
    \\&\leq
    \frac{d}{2}
    \textrm{max}_k
    P
    \Big{(}
    \sum_{i=1}^{D(d,k)}
    E_k^i > 0
    \Big{)}
    =
    O(e^{-d}).
\end{aligned}
\end{equation}
Hence at $d \rightarrow \infty$, the probability of a logical error under the optimal decoder reduces exponentially with $d$ for any $p<1/2$. 
\qed
\end{proof}

\section{Optimal vs near-optimal decoding} \label{sec:1-line-decoder}

In the previous section, we used Lemma~\ref{lemma:1} to show that the optimal decoder has a threshold of 50\% in the limit of large codes. The decoder finds the shortest possible correction operator compatible with the syndrome. Such a decoder, however, is impractical, as it requires minimization of the correction operator along exponentially many $k$-lines. In this section, we show that Lemma~\ref{lemma:1} yields an even stronger result, enabling a scalable yet optimal decoding in the regime of large codes. 

Consider a new decoder that we refer to as a \emph{1-line decoder}. It finds a correction compatible with the syndrome and containing less than $\lceil (d+1)/2 \rceil$ of qubits along all 1-lines. However, the decoder does not guarantee that the correction is shorter than half of qubits along all $k$-lines for $k>1$. 
Below, we prove the following Lemma.

\begin{lemma}\label{lemma:3}
    Given an error $\mathcal{E}$ is correctable with the optimal decoder, the probability that the 1-line decoder fails is $O(e^{-d})$ for $d \rightarrow \infty$.
\end{lemma}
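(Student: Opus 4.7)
My plan is to reduce the failure of the 1-line decoder to the $k=1$ case of Lemma~\ref{lemma:1}. Since each of the $d+1$ materialized symmetries of the parity code corresponds to a distance-$d$ repetition code on a single qubit line, the 1-line decoder can be realized by decoding these $d+1$ repetition codes independently and then gluing their outputs. I would first argue that whenever every 1-line $L_1^j$ satisfies $|\mathcal{E}\cap L_1^j|<d/2$, each per-line repetition code has a unique minimum-weight correction; because every qubit lies in exactly two 1-lines, these per-line corrections automatically agree on every shared qubit and combine into a global correction $C$ satisfying both the 1-line constraint $|C\cap L_1^j|<\lceil(d+1)/2\rceil$ and the property that $C\oplus\mathcal{E}$ is a stabilizer, so that the 1-line decoder succeeds.

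Consequently, the 1-line decoder can fail only if at least one 1-line satisfies $|\mathcal{E}\cap L_1^j|\geq d/2$, i.e., $E_1^j=1$ for some $j$. By Lemma~\ref{lemma:1} with $k=1$, this event has probability $O(e^{-d})$. Since $P(\text{optimal succeeds})=1-O(e^{-d})$ by Lemma~\ref{lemma:2}, Bayes' rule yields $P(\text{1-line fails}\mid\text{optimal succeeds})=O(e^{-d})$, as claimed.

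The main obstacle is the gluing step. I need to rule out the possibility that the 1-line constraint admits a stabilizer-equivalent alternative output with a different logical content; equivalently, that some correction of the form $C'=\mathcal{E}\oplus L_k^a\oplus S$, for a nontrivial $k$-line operator $L_k^a$ and a stabilizer $S$, could satisfy $|C'\cap L_1^j|<\lceil(d+1)/2\rceil$ for every $j$. A careful combinatorial analysis using the intersection formulas $|L_k^a\cap L_1^j|=d+1-k$ for the $k$ constituent 1-lines of $L_k^a$ and $|L_k^a\cap L_1^j|=k$ otherwise should show that such a $C'$ forces $|\mathcal{E}\cap L_1^j|$ to lie within $O(1)$ of $d/2$ on some line, which is again controlled by the Chernoff bound underlying Lemma~\ref{lemma:1}. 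This $O(1)$ slack does not affect the $O(e^{-d})$ exponential decay.
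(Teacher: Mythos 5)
Your overall strategy---reducing failure of the 1-line decoder to heavy-line events controlled by Lemma~\ref{lemma:1}---is the same as the paper's, and you correctly isolate the crux in your last paragraph: alternative syndrome-compatible outputs $C'=\mathcal{E}\oplus L_k^a$ passing all 1-line tests (as an aside, the $\oplus S$ is vacuous here: the stabilizers are all $Z$-type, so two $X$-error sets with equal syndrome differ exactly by a $k$-line). But your first two paragraphs do not stand on their own. There is no measured per-line syndrome: each plaquette check intersects every qubit line in an even number (0 or 2) of qubits and its outcome is a joint parity over qubits from several lines, so the $d+1$ lines cannot be decoded as independent repetition codes and glued (the materialized symmetries are chains of \emph{stabilizers}, not qubit lines, and there are $d$ real ones plus one virtual). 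After imposing syndrome compatibility, the residual freedom is one global choice of a subset $K$ of lines---the class $\mathcal{E}\oplus L_K$---not $d+1$ independent per-line choices; how many qubits of $L_1^j$ are flipped depends on all of $K$. Consequently your paragraph-two conclusion, ``the 1-line decoder can fail only if $E_1^j=1$ for some $j$,'' is precisely what needs proof, and it carries a red flag: under the lemma's conditioning every 1-line already satisfies $E_1^j(\mathcal{E})=0$, so your claim would give conditional failure probability exactly zero, strictly stronger than the lemma. The actual failure modes live in the slack: $\mathcal{E}$ can sit just below $d/2$ on the constituent lines of some $L_k^a$ (so the optimal decoder succeeds) while $\mathcal{E}\oplus L_k^a$ still passes all 1-line tests. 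The paper's proof avoids any gluing construction and argues directly from the decoder's defining constraints: failure means $\mathcal{C}=\mathcal{E}\oplus L_k^a$; the case $k=1$ is excluded because $|\mathcal{C}\cap L_1^j|\geq d-|\mathcal{E}\cap L_1^j|$ together with $|\mathcal{C}\cap L_1^j|<\lceil(d+1)/2\rceil$ would force $E_1^j(\mathcal{E})=1$, contradicting optimal correctability; the remaining $k\geq 2$ events are then bounded by the $O(e^{-kd})$ rates of Lemma~\ref{lemma:1}, giving $O(e^{-d})$.

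Your quantitative fix for the crux also misestimates the slack. Your intersection formulas are right, but from $|C'\cap L_1^j|<\lceil(d+1)/2\rceil$ on a constituent line $j$ of $L_k^a$, using $|L_k^a\cap L_1^j|=d+1-k$, one gets $|\mathcal{E}\cap L_1^j|>(d+1)/2-k$: the slack is $O(k)$ per line, hence $\Theta(k^2)$ aggregated over the $k$-line, not $O(1)$. For $k=O(1)$ this is harmless and the Chernoff rate survives, but in the entropic regime $k=\Theta(d)$---exactly where Lemma~\ref{lemma:1} needs its large-$k$ branch because $\binom{d+1}{k}$ is exponentially large---the forcing degenerates: at $k\approx(d+1)/2$ the per-line threshold $(d+1)/2-k$ is near zero, each constituent-line event has probability close to $1/2$, and the naive union bound $\binom{d+1}{k}2^{-k}$ diverges. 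To recover exponential suppression there you must additionally use the $d+1-k$ non-constituent lines, each of which has $k$ of its qubits flipped by $L_k^a$ and is thereby pushed to expected weight $dp+k(1-2p)\approx d/2$, so that each also violates the 1-line test with probability close to $1/2$. So your closing sentence, as stated, does not go through; the weight-versus-entropy tradeoff of Lemma~\ref{lemma:1} has to be redone with the slack included rather than dismissed.
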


\begin{proof}
The proof follows directly from Lemma~\ref{lemma:1}. Consider error $\mathcal{E}$ correctable by the optimal decoder and correction $\mathcal{C}$ returned by the 1-line decoder. 
All 1-lines along $\mathcal{C}$ are error-free, $E_1^i(\mathcal{C}) = 0$ $\forall i \in [0,d]$. Hence, the weight of correction $\mathcal{C}$ is less than half of all of the code's qubits, $|\mathcal{C}| < n/2$. In this regime, $\lceil (d+1)/2 \rceil \approx d/2$ and equation~\eqref{eq:lemma-1} is valid, that is, the probability that $\mathcal{C}$ is longer than half of the qubits along any of $k$-lines for $k>1$ is $O(e^{-kd})$ compared to the probability of an error in any of 1-lines. Since the latter is ruled out by the 1-line decoder, the probability of $\mathcal{C}$ being longer than half along any of $k$-lines is upper bounded by $O(e^{-d})$ in the leading order. 
\qed
\end{proof}

As follows from Lemma~\ref{lemma:3}, at large $d$ the logical error rate of the 1-line decoder is of the same order as the logical error rate of the optimal decoder, $O(e^{-d})$. Therefore, the 1-line decoder achieves optimality at $d \rightarrow \infty$. On the other hand, the 1-line decoder only requires direct minimization of the correction set $\mathcal{C}$ along $d$ 1-lines instead of all $k$-lines. Hence, it's complexity scales much more favorably with the code size. A concrete construction of a 1-line decoder is provided in the next section.

\section{A two-step parity decoder} \label{sec:decoder}

We will now use the statistical and symmetrical properties of the parity code to construct a 1-line decoder for the parity code in two steps. The first step, which we also refer to as the matching step, finds a correction compatible with the observed syndrome by pairing defects along one-dimensional symmetries of the code. Clusters derived by such pairing contain a correction operator as their interior. The second step, which we refer to as post-processing, checks whether the correction returned by the first step is shorter than $\lceil (d+1)/2 \rceil$ along each 1-line $L_1^i$ and, if not, multiplies the correction operator by the corresponding 1-line operator. In conjunction, the two steps effectively constitute the 1-line decoder described in the previous section, that is, find a correction that returns the code into the correct codespace and has a weight less that $\lceil (d+1)/2 \rceil$ along all 1-lines. According to Lemma~\ref{lemma:2} and Lemma~\ref{lemma:3}, the two-step decoder demonstrates the same error correcting capabilities as the optimal decoder and yields the code-capacity threshold of 50\% at large code sizes. Below we describe the two steps in more detail.

\subsection{Step 1: matching}\label{sec:matching-step}

The goal of the matching step is to return the system to the codespace by pairing syndrome defects along their respective symmetries according to some metrics. There exist various possibilities to do so. Indeed, any matching that respects the conservation of the syndrome parity forms closed clusters and hence results in a recovery operator. We will consider three choices of a matching step. 

MWPM aims to find the global minimum of the clusters perimeter during the matching step. Since in the bulk of the code the perimeter of the cluster grows monotonically with its area, the most probable correction operator compatible with the syndrome corresponds to the interior of clusters with the smallest total perimeter, that is, the shortest cumulative pairing along all of the code symmetries, including the virtual symmetry. Such a pairing can be found by executing the MWPM algorithm on a matching graph, which we construct in Appendix~\ref{app:mpwm-graph-ideal} analogously to the surface-code symmetry graph of Ref.~\cite{PhysRevLett.124.130501}. While defects in real stabilizers can only be matched within their respective symmetry, virtual stabilizers can be added or removed freely, which creates efficient interactions between one-dimensional symmetries. Since the MWPM algorithm seeks for the global minimum of the cluster perimeter among all configurations compatible with the observed syndrome, all possible combinations of virtual stabilizers must be considered by the algorithm, leading to the worst-case complexity cubic in the number of nodes~\cite{Kolmogorov2009}, i.e., $O(d^6)$. Throughout the text, the term complexity refers to the worst-case scenario and the total amount of computations required for resolving the syndrome. Parallelisation and average-case runtime will be discussed in Sec.~\ref{sec:complexity}.

\begin{figure}[t]
\includegraphics[width=0.95\columnwidth]{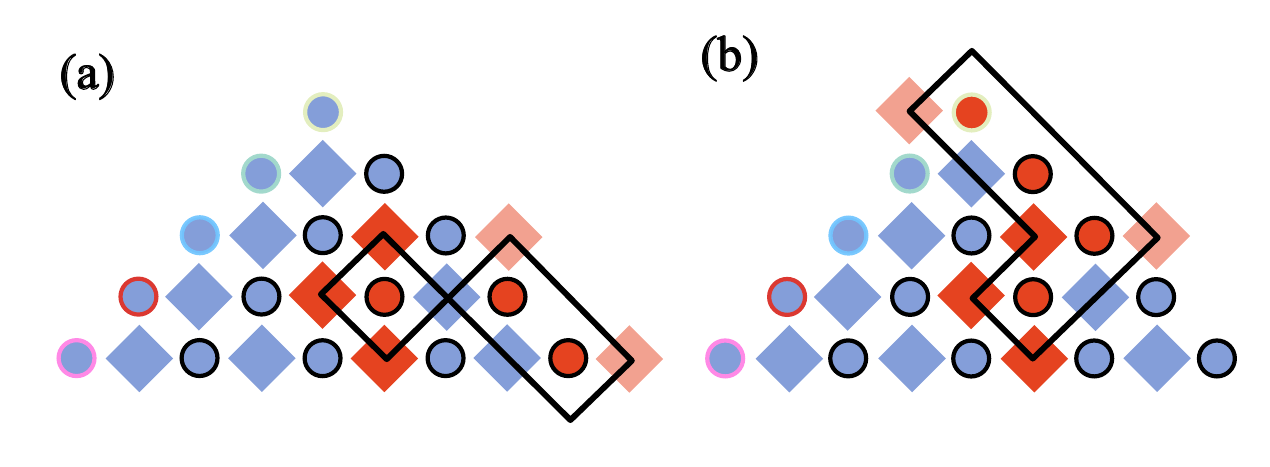}
\caption{\label{fig:post-processing} Ambiguous matching clusters. Clusters in both panels correspond to the same measured syndrome~(bright red plaquettes) but different virtual stabilizers~(pale red plaquettes). Since the two clusters have the same perimeter of length 10, the direct implementation of the MWPM decoder does not distinguish between such degenerate solutions. However, the correction of panel (a) has weight 3, while the solution of panel (b) has weight 4. Executing the post-processing step described in Ref.~\ref{sec:minimization-step} allows to remove the ambiguity and choose the correct, i.e., the lower-weight recovery operation.}
\end{figure}

We note that even though the full MWPM finds the global minimum among all of the syndrome pairings, the matching step can still return a wrong correction operator where the optimal decoder would succeed. Indeed, as mentioned above, the cluster interior grows monotonically with its perimeter in the bulk of the code, however, the boundaries can modify such correspondence. As an example, the two correction operators shown in Fig.~\ref{fig:post-processing} correspond to the same perimeter of the clusters, however, have different weights. Since the MWPM only aims to minimize the perimeter, it might return the higher-weight correction of Fig.~\ref{fig:post-processing}(b) instead of Fig.~\ref{fig:post-processing}(a), and therefore, fail. Furthermore, in Appendix~\ref{app:need-of-post-processing}, we demonstrate a configuration where a shorter perimeter corresponds to a higher-weight recovery operator. As we show in Sec.~\ref{sec:minimization-step}, the second step of the decoder is designed to resolve the correct recovery operator, even when the first step chooses the wrong one.

Next, we consider independent symmetry matching~(ISM). In contrast to MWPM, interactions between one-dimensional symmetries are ignored by treating each symmetry as an independent repetition code. Within each individual 1D symmetry, the decoder finds the lowest-weight correction by either pairing defects along the symmetry, or connecting defects to virtual stabilizers. After this procedure is performed for each of $d$ real symmetries, the remaining virtual stabilizers created at the ends of the symmetry lines are paired along the virtual symmetry. The ISM algorithm is detailed in Appendix.~\ref{app:repetition-graph-ideal}. 

Since the ISM only finds local minima, it can return a matching different from the solution derived with full MWPM. In Appendix~\ref{app:mwpm-vs-independent}, we provide an example of a syndrome which results in different solutions depending on which matching algorithm is used. The worst-case complexity of independent matching is the same as decoding $d+1$ independent distance-$d$ repetition codes, that is, $O(d^2)$ in total.

For completeness, we note that even random matching along each symmetry returns the code to the correct codespace. In this case, pairing defects along each symmetry does not require any minimization, and matching complexity is identical to simply listing all the defects along the symmetries.

\begin{figure}[t]
\includegraphics[width=0.95\columnwidth]{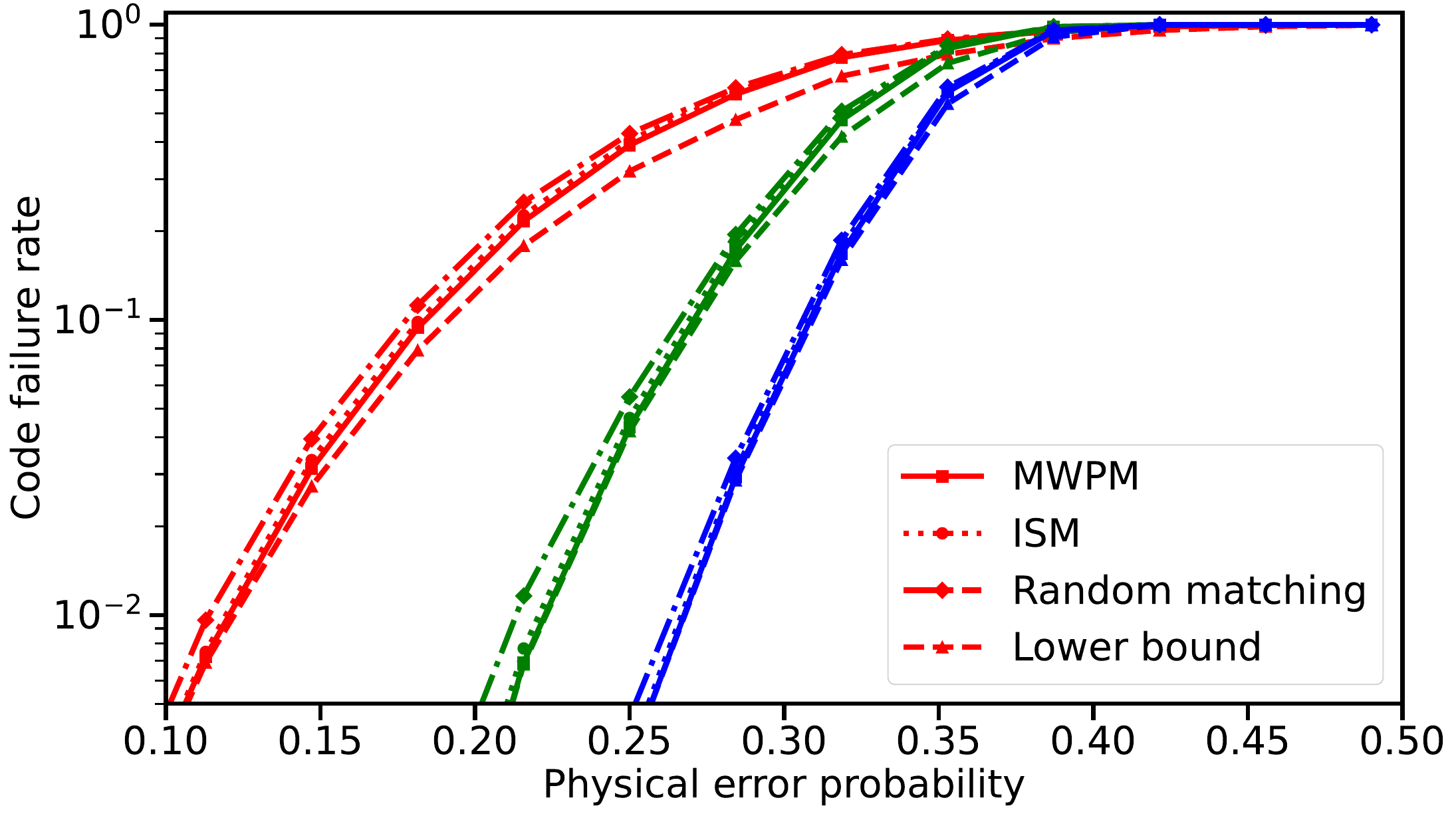}
\caption{\label{fig:matching-benchmark} Logical failure rate of the code, i.e., the probability that at least one of $d$ logical qubits is flipped, versus physical error rate of data qubits. The two-step decoder is used to resolve the syndrome. Three families of curves correspond to parity codes with distances $d=11$~(red), $d=31$~(green), and $d=51$~(blue). Circles, diamonds, and squares correspond to a two-step decoder that uses, respectively, ISM, random matching, and MWPM during the first step of decoding. In all cases, post-processing is applied. Triangles correspond to the probability of a logical error along at least one 1-line $L_1^i$, that is, to a lower-bound failure rate of the optimal decoder.}
\end{figure}

\subsection{Step 2: post-processing}\label{sec:minimization-step}

\begin{figure}[t]
\includegraphics[width=0.95\columnwidth]{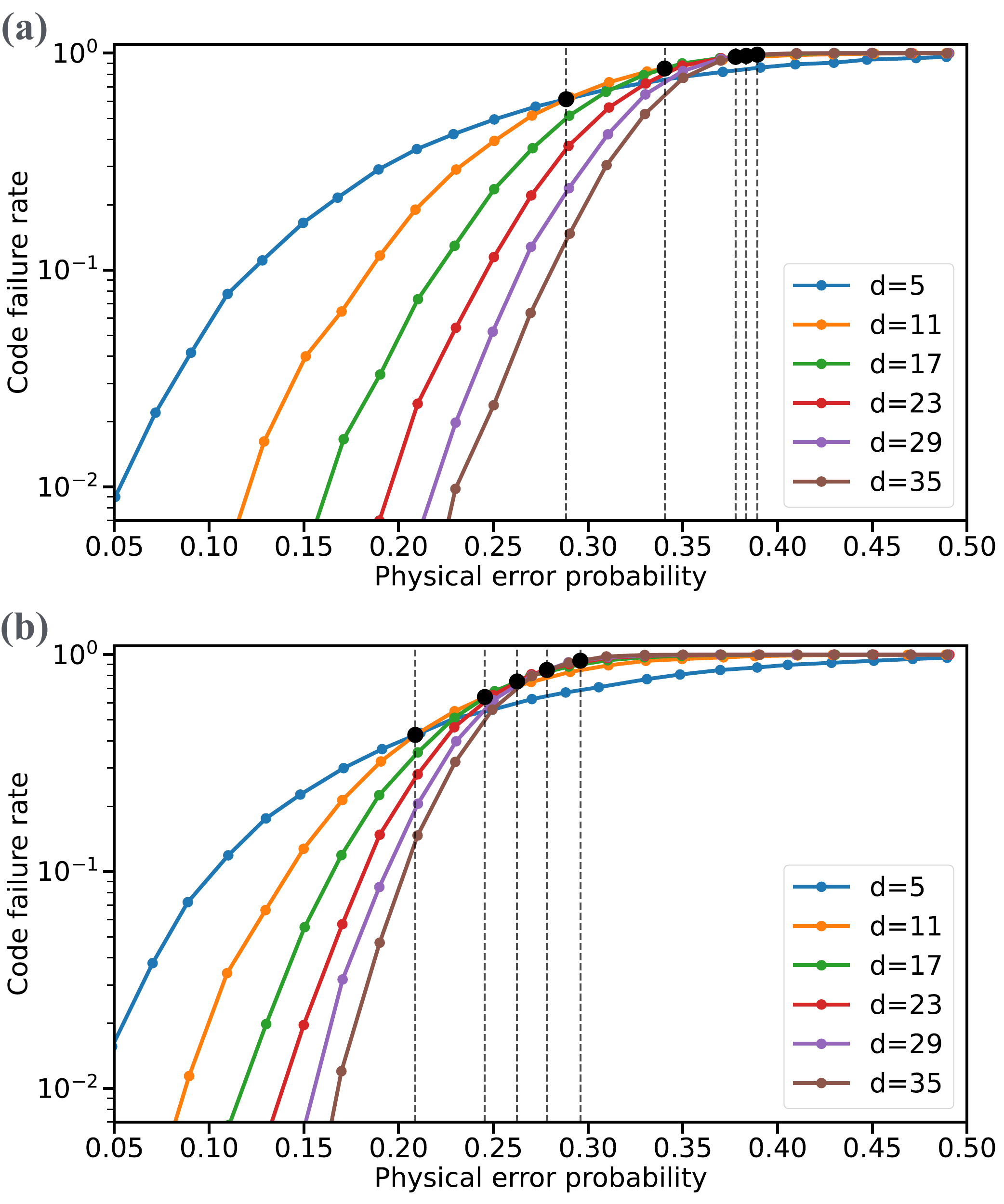}
\caption{\label{fig:LER-ideal} Code failure rate versus physical error rate for different distances $d$~(and, hence, logical system sizes $d$). In both panels we used MWPM decoder during the matching step. Panels~(a) and (b) correspond to decoder operation with and without post-processing, respectively. Dashed lines~(black dots) show the intersection of curves corresponding to $d$ and $d+6$, which shifts towards larger physical error probability as $d$ increases.}
\end{figure}

The second step of the decoder post-processes the output from the first step by ensuring that none of 1-lines contains an error. That is, if correction $\mathcal{C}$ returned by the matching step is such that $|L_1^i \cap \mathcal{C}| > d/2$, $\mathcal{C}$ is replaced by its symmetric difference with $L_1^i$. Using operator notations, a correction operator is multiplied by 1-line operators until the correction is shorter than half qubits long along each 1-line. The second step of the decoder hence reduces to the following algorithm.

\begin{algorithm}[H]\label{alg:pp-ideal}
    \caption{Post-processing}
    \SetAlgoLined
    \KwIn{Set of qubit indices $\mathcal{C}$, output from step 1\;}
    \KwOut{New $\mathcal{C}$ such that $|\mathcal{C} \cap L_1^i| < \lceil (d+1)/2 \rceil \quad \forall i$\;}
    Let $l_i \gets L_1^i\cap \mathcal{C} \quad \forall i$\;
    Let $m \gets \arg\max_{i}|l_i|$\;
    \While{$|l_{\text{m}}| \geq \lceil (d+1)/2 \rceil$}{
        $\mathcal{C} \gets \mathcal{C} \triangle L_1^m$\;
            $l_i \gets L_1^i\cap \mathcal{C} \quad \forall i$\;
        $m \gets \arg\max_{i}|l_i|$\;
    }
    \Return $\mathcal{C}$
\end{algorithm}

Here, $\mathcal{C}$ denotes a set of qubits to be corrected and is initialized to the output from step 1. Each set $l_i$ contains indices of qubits to be flipped along 1-line $L_1^i$, and $m$ labels the largest of the $l_i$s. Inside the loop, $\mathcal{C}$ is replaced by a symmetric difference between sets $\mathcal{C}$ and $L_1^m$, which we denote by $\mathcal{C} \triangle L_1^m$. 
The process repeats until the correction along each 1-line is shorter than $\lceil (d+1)/2 \rceil$ qubits. The algorithm returns a list of qubit indices to be corrected, which is guaranteed to be shorter than $\lceil (d+1)/2 \rceil$ qubits along each 1-line, i.e., returns the output of the 1-line decoder described in Sec.~\ref{sec:1-line-decoder}. As discussed in Sec.~\ref{sec:complexity}, the runtime of the algorithm is $O(d)$ in the sub-threshold regime.

The two-step decoder yields optimal decoding at $d \rightarrow \infty$. In this regime, the choice of a matching algorithm during the first step does not affect the performance of the decoder, since the second step of decoding will turn any input into the optimal 1-line decoder. One can choose either ISM or random matching to minimize the runtime of the entire decoder. Since in this case the first step of the decoder takes the form of $O(d)$ independent distance-$d$ codes, the complexity of the entire algorithm scales as $O(d^2)$
while yielding optimal decoding capabilities for large codes.

\begin{figure}[t]
\includegraphics[width=0.95\columnwidth]{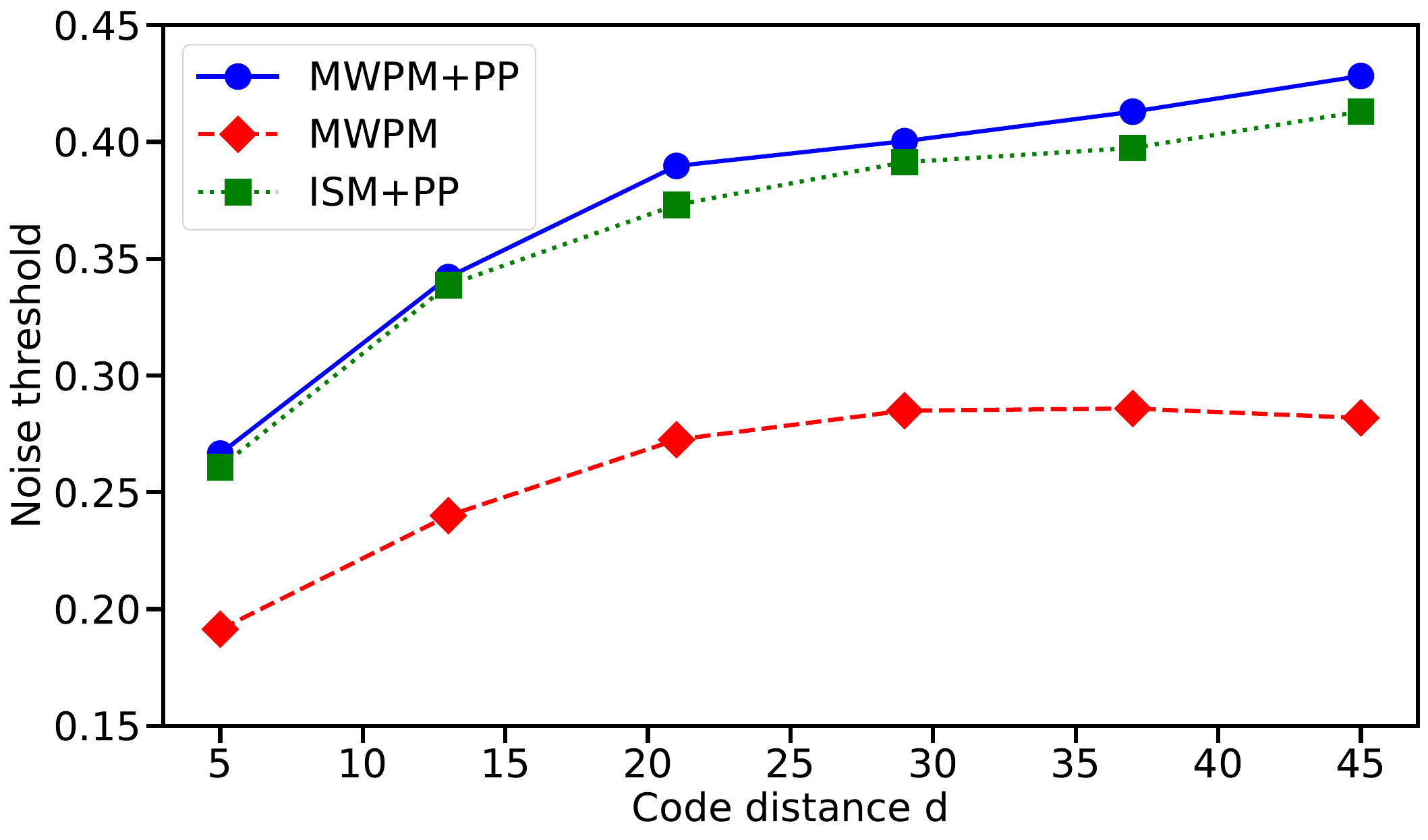}
\caption{\label{fig:thresholds-ideal} Code-capacity threshold of the parity code vs code distance $d$. Post-processing yields a significant improvement compared to the bare matching results. As in Fig.~\ref{fig:matching-benchmark}, the choice of the matching algorithm has little effect on the correction capabilities when post-processing is applied. We define the threshold of a distance-$d$ code as the physical error rate at which curves of Fig.~\ref{fig:LER-ideal} corresponding to $d$ and $d+2$ intersect.
}
\end{figure}

\section{Finite-size codes under ideal measurements} \label{sec:code-capacity}

For finite-size codes the results of Secs.~\ref{sec:statistical-properties} and \ref{sec:1-line-decoder} are not applicable; neither the optimal decoder has a threshold of 50\% nor the 1-line decoder guarantees optimality. In addition, the choice of initial matching might have a significant effect on the decoder performance in such a setting. In this section, we numerically investigate the performance of a two-step decoder for moderate-size codes under ideal measurements. 

Figure~\ref{fig:matching-benchmark} demonstrates the results of Monte-Carlo simulations for various choices of matching during the first step. In the range of simulated code sizes, the three matching algorithms demonstrate similar decoding performance. We also compare the failure rate of the two-step decoder to the lower-bound failure rate, which corresponds to the probability of having an error along at least one 1-line. As shown in the figure, both the MWPM and ISM saturate the lower bound in the regime of low physical error rates. Such behaviour reflects the fact that the code failures occur predominantly due to errors in 1-lines. As we increase the code size, the saturation region spans larger error rates, indicating the optimality of the two-step decoder for the entire region ${0<p<1/2}$ at $d \rightarrow \infty$. We note that, at the cost of a worse complexity scaling, the accuracy of the decoder can in principle be increased by considering also $m$-lines with ${m>1}$ in post-processing, that is, by using an $m$-line instead of a 1-line decoder. However, Fig.~\ref{fig:matching-benchmark} demonstrates that considering only 1-lines saturates the lower bound even for relatively small codes. Hence, in this work, we focus solely on a 1-line decoder.

\begin{figure}[t]
\includegraphics[width=0.85\columnwidth]{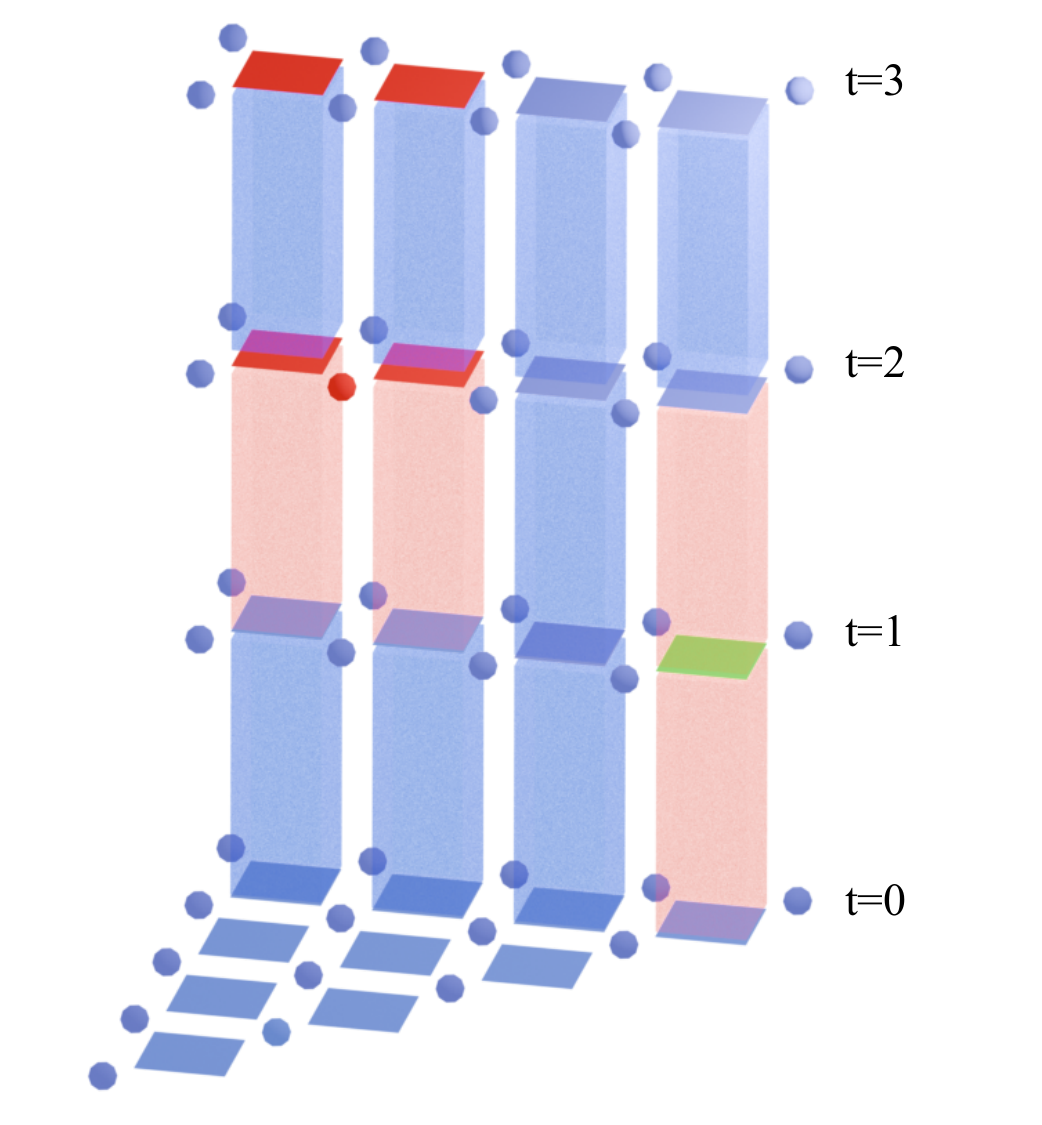}
\caption{\label{fig:3d-symmetry} A spacetime symmetry of a distance-5 parity code under faulty measurements. Time slice $t=0$ shows the full distance-5 code, while other time slices only show qubits and stabilizers belonging to one symmetry. Blue plaquettes correspond to $+1$ stabilizer measurements. Red and green plaquettes correspond to $-1$ stabilizer measurements flipped by qubit and measurement errors, respectively. Note that here we use different colors merely to demonstrate how the syndrome behaves under different types of errors; in real experiments and in our simulations green and red plaquettes are indistinguishable. Cuboids show spacetime stabilizers, i.e., parities between subsequent stabilizer measurements, with red cuboids being spacetime defects. A data qubit error~(red sphere) creates a pair of spacetime defects along a spatial dimension of the symmetry. A measurement error~(green plaquette) at $t=1$ plane creates a pair of spacetime defects along the temporal dimension of the symmetry. Hence, the parity of defects is conserved within the shown subset of spacetime stabilizers. The subset is therefore a symmetry of the code.}
\end{figure}

Figure~\ref{fig:LER-ideal} reflects the importance of the post-processing step. While the MWPM algorithm can be used without any further post-processing, its performance in this case is significantly reduced due to the cluster ambiguity exemplified in Fig.~\ref{fig:post-processing}. Note that in both panels of the figure, the intersection of curves shifts towards higher physical error rate $p$ as $d$ is increased. Identifying a threshold of a distance-$d$ code with a physical error rate at which the curves corresponding to distances $d$ and $(d+2)$ intersect, we plot threshold curves in Fig.~\ref{fig:thresholds-ideal} and find a significant improvement of the threshold value when post-processing is applied. As expected, the threshold grows monotonically with distance $d$ and, when post-processing is used, exceeds 40\% for moderate-sized codes. This behaviour agrees with the analytical results of Secs.~\ref{sec:statistical-properties} and \ref{sec:1-line-decoder}, where we have shown saturation of the one-line decoder threshold at $p_{\textrm{th}}=1/2$ in the limit of large codes. 

\begin{figure}[t]
\includegraphics[width=0.9\columnwidth]{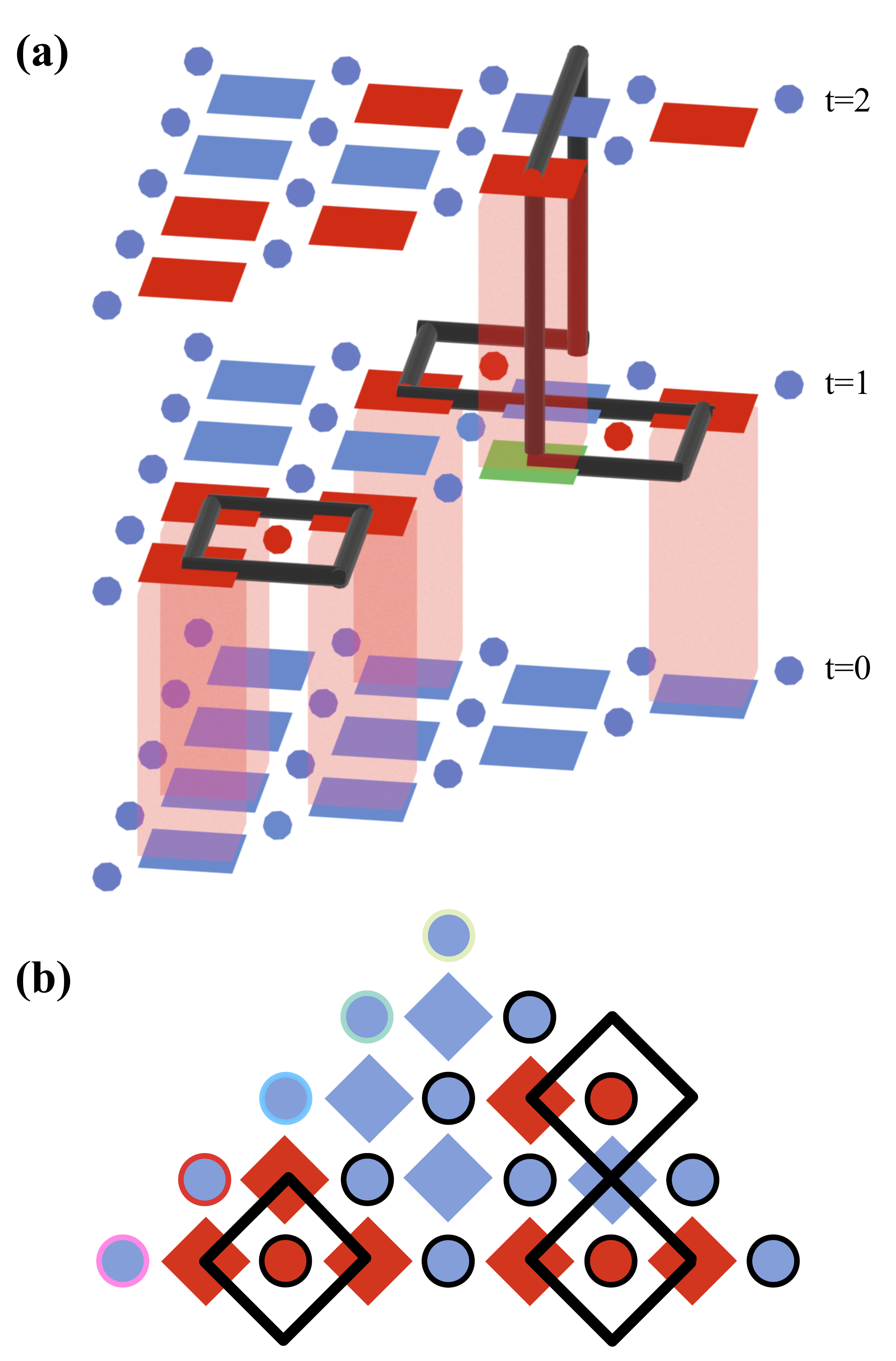}
\caption{\label{fig:3d-matching} Spacetime syndrome matching in a distance-5 LHZ code under noisy measurements. (a)~Errors in data qubits~(red circles) and a single measurement error~(green plaquette) form spacetime defects~(red cubiods). Only the defects are shown, error-free spacetime stabilizers are omitted to avoid cluttering. Tops of red cuboids are paired~(black curves) during the matching step along each of the $(1+1)$-dimensional symmetries, forming closed clusters. (b)~Clusters in $(2+1)$ dimensions are projected onto a 2-dimensional parity code at time slice $t=1$. Errors in data qubits correspond to the interior of the projected clusters.}
\end{figure}

\begin{figure*}[t]
\includegraphics[width=0.95\textwidth]{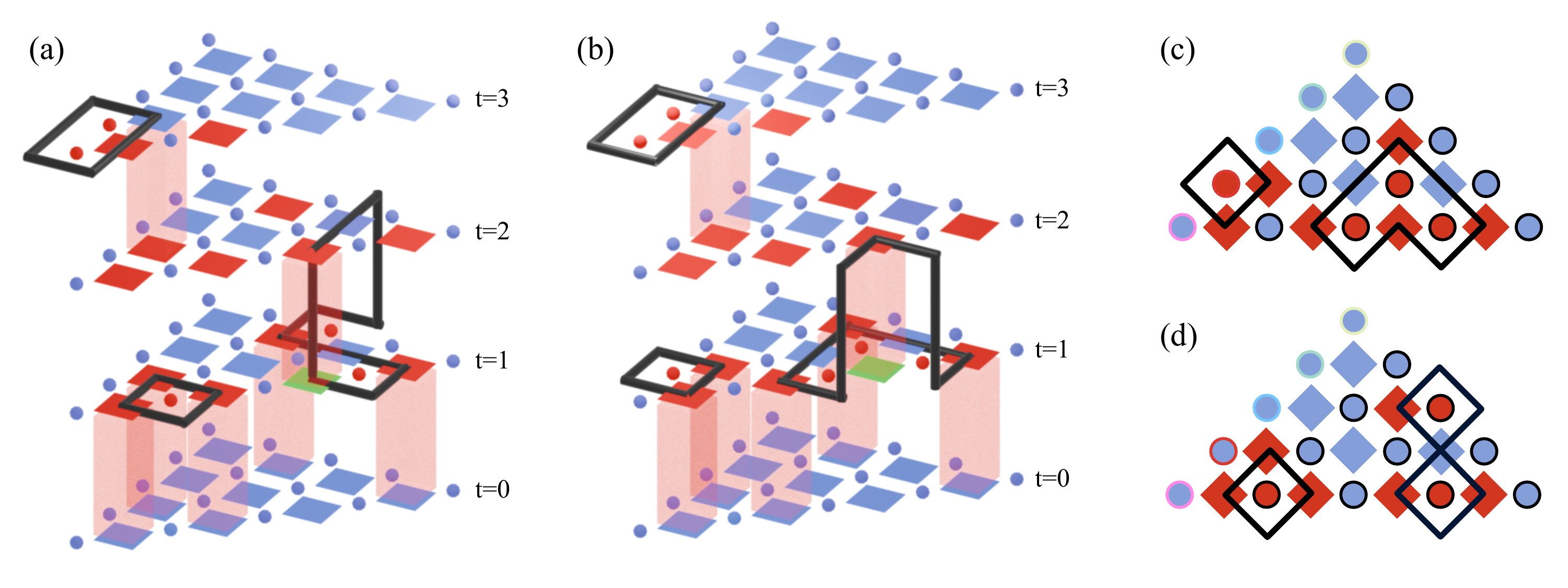}
\caption{\label{fig:3d-post-processing} Post-processing of a spacetime syndrome. 
Panels (a) and (b) show clusters corresponding to the same syndrome. While the total cluster perimeters of two configurations are identical, the weight of spacetime correction $\mathcal{C}_{\textrm{3D}}$ in panel~(a) is smaller. In contrary, the cumulative error $\mathcal{C}$ calculated according to Eq.~\eqref{eq:C-projected} is smaller for configuration (b). Since the most probable case corresponds to the lowest-weight error across $d$ measurement rounds, the goal of post-processing is to minimize $\mathcal{C}_{\textrm{3D}}$. First, project each cluster onto a single time slice by majority voting and find the clusters interior. As an example, panel (c)~shows the projected three-dimensional cluster of panel~(b) onto time slice $t=1$. Next, run the post-processing algorithm~\ref{alg:pp-ideal} within each time slice. This will transform operator of panel~(c) into the operator of panel~(d). When collected together, the post-processed correction operators within each time slice will yield the configuration of panel~(a). Hence, the post-processing minimized the weight of $\mathcal{C}_{\textrm{3D}}$. The final correction can be calculated by finding the cumulative effect of $d$ stabilizers measurements according to Eq.~\eqref{eq:C-projected}.}
\end{figure*}

\section{Noisy measurements} \label{sec:faulty-measurements}

We now turn to the case where ancilla measurements are unreliable and may result in incorrect syndromes that violate the parity conservation law. Following the
approach of Ref.~\cite{PhysRevLett.124.130501}, we treat errors in data and ancilla qubits on equal footing and construct a new spacetime syndrome in $(2+1)$ dimensions, where the temporal dimension is formed by repeated stabilizer measurements. The parity conservation laws can be recovered along new symmetries within this higher-dimensional space. 

\subsection{Spacetime symmetries}

The new spacetime symmetries are formed by $(1+1)$-dimensional subspaces of a $(2+1)$-dimensional spacetime. An example of a spacetime symmetry for the code under faulty measurements is shown in Fig.~\ref{fig:3d-symmetry}. Each time slice $t$ corresponds to a single round of stabilizer measurements. The spacetime syndrome is formed by parities between subsequent stabilizer measurements, with defects corresponding to a change in two measurement outcomes. If an error occurs in a data qubit, it creates a pair of spacetime defects along the spatial dimension of a ${(1+1)}$-dimensional symmetry. If a stabilizer measurement is faulty, a pair of spacetime defects is created along the temporal dimension. Together, data and measurement errors create a spacetime syndrome in ${(2+1)}$ dimensions which respects the parity conservation laws within each of $d$ $(1+1)$-dimensional symmetries. As in the case of ideal measurements, we add virtual spacetime symmetries to preserve the syndrome parity at both the spatial and the temporal boundaries of the code. 

\subsection{Matching in $(2+1)$ dimensions}

Since the parity conservation laws are respected along new higher-dimensional symmetries, the clustering technique used in the ideal-measurement case can be employed to find the correction operator in the noisy measurement setting too. Recall that data qubit errors create and deform clusters while preserving the parity conservation within each symmetry. Similarly, when faulty measurements are present, a single measurement error will deform the cluster along temporal dimension of the symmetry. Since both data and measurement errors respect spacetime symmetries, connecting defects along each symmetry will form a closed contour. One can find the correction operator by determining the spatial interior of the cluster. 

Figure~\ref{fig:3d-matching} demonstrates an example of two clusters derived by matching. A cluster formed solely by data qubit errors belongs to a single time slice, while the combinations of qubit and measurement errors create a three-dimensional cluster across time slices. To find the correction, we first project each cluster to a single plane of the parity code, which removes information about where and when the measurement errors took place. Figure~\ref{fig:3d-matching}~(b) shows a projection of both clusters to time plane $t=1$. Within each plane, we find indices $\{q\}$ of qubits enclosed by projected clusters, similarly to the ideal-measurement case. Combined together, qubit indices $\{q\}$ at each time slice $t$ form set $\mathcal{C}_{\textrm{3D}}=\{(t,q)\}$ in $(2+1)$-dimensional spacetime, which contains information about qubit errors that took place at each round of stabilizer measurements and constitutes the output from the first step of decoding. 

As in the case of ideal measurements, one can make various choices of matching algorithms in $(2+1)$ dimensions. The full MWPM finds the shortest-perimeter cluster configuration among all possible combinations of virtual stabilizers. Since for a distance-$d$ parity code the syndrome measurement has to be repeated $d$ times, the number of defects is $N_{\textrm{d}} = O(d^3)$ and the worst-case complexity of the MWPM scales as $O(N_{\textrm{d}}^3) = O(d^9)$. Alternatively, we can implement ISM by matching defects independently within each of $d$ $(1+1)$-dimensional symmetries. As in the code-capacity case, the ISM algorithm does not guarantee a global minimum, but has a complexity that scales significantly better than the full MWPM. Namely, the ISM matching step reduces to running MWPM on $d$ independent matching graphs, each corresponding to a spacetime symmetry and containing $O(d^2)$ vertices. The total worst-case complexity is hence $O(d^7)$. We provide details on the construction of matching graphs for MWPM
and ISM in Appendices~\ref{app:mwpm-graph-3d} and \ref{app:repetition-graph-3d}, respectively. 

\subsection{Correction and post-processing under noisy measurements}

\begin{figure}[t]
\includegraphics[width=0.95\columnwidth]{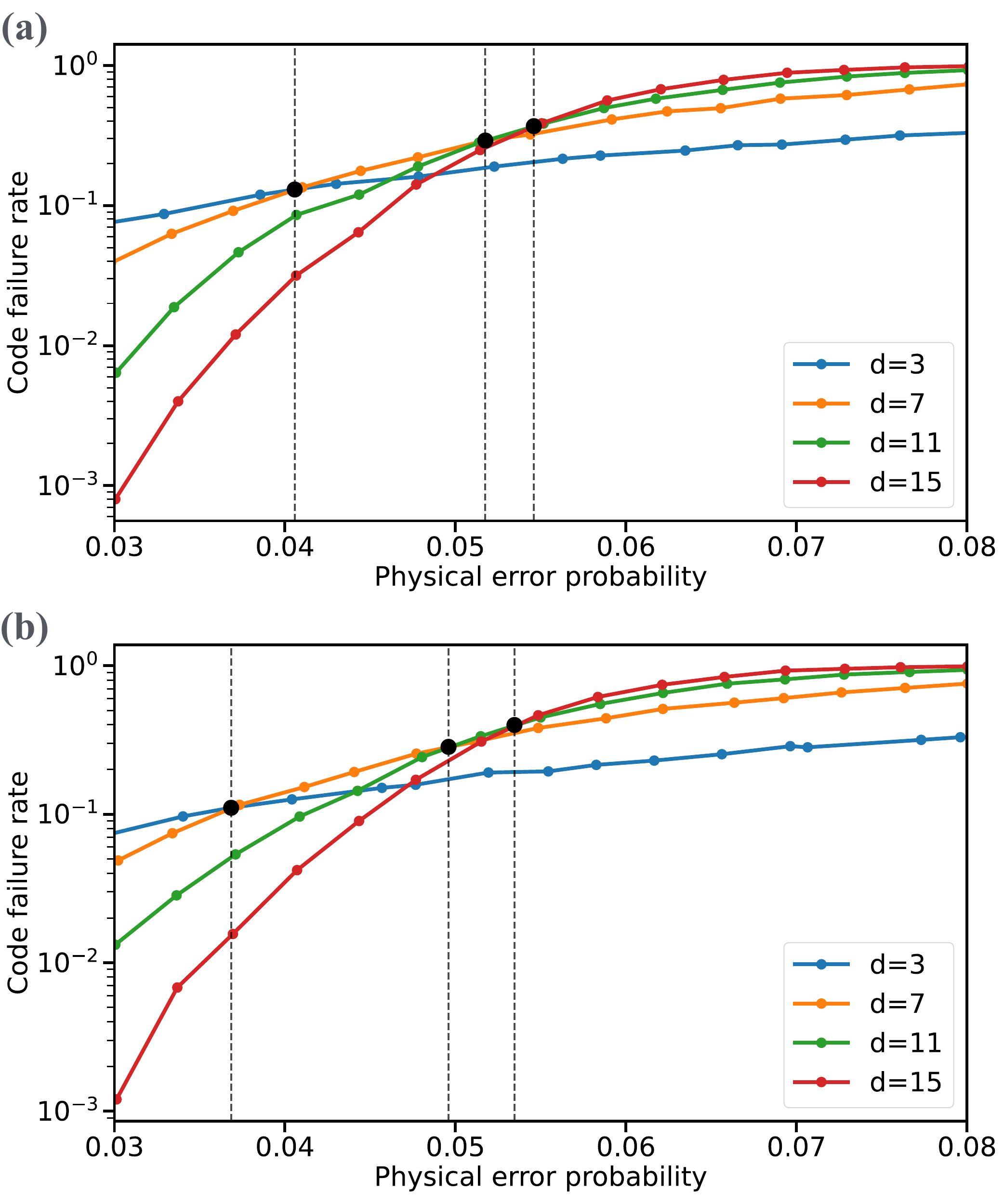}
\caption{\label{fig:LER-FT}Code failure rate versus physical error rate for different distances $d$. The measurement error rate is assumed identical to the data qubit rate. The data is shown for MWPM (a)~with and (b)~without post-processing. Analogous results for ISM are provided in Fig.~\ref{fig:si-comparison-1} of Appendix~\ref{app:data-for-ft-matching}. In all cases, the threshold is improved by post-processing for small-distance codes, however, the effect is more prominent when ISM is in use.}
\end{figure}

Under noisy measurements, the matching step returns the spacetime correction $\mathcal{C}_{\textrm{3D}}$, containing information about which qubits have been Pauli-flipped at each round of stabilizer measurements. To determine the final correction to be applied after $d$ measurement rounds, we calculate the \emph{cumulative correction} $\mathcal{C}$ by calculating a marginal sum of flips modulo 2 for each qubit,
\begin{equation}\label{eq:C-projected}
    \mathcal{C}
    =
    \Big{[}\sum_{t=0}^{d-1}
    \mathcal{C}_{\textrm{3D}}
    \Big{]}\mod{2}.
\end{equation}
This reflects the fact that a qubit flipped twice during $d$ measurement rounds does not need to be corrected.

Finally, we tailor the post-processing procedure developed in Sec.~\ref{sec:minimization-step} to the noisy-measurements setup. The most likely error corresponds to the lowest-weight physical error configuration compatible with the spacetime syndrome. Hence, our aim is to minimize the weight of the spacetime correction $\mathcal{C}_{\textrm{3D}}$, not the cumulative correction $\mathcal{C}$. As an example, assume a configuration shown in Fig.~\ref{fig:3d-post-processing}. Panels (a) and (b) correspond to two configurations of clusters compatible with the same syndrome. Since the total perimeters of the clusters in the two panels are identical, the matching step of the decoder can return either of the two. The total weight of correction $\mathcal{C}_{\textrm{3D}}$ enclosed by clusters in (a) is smaller than in (b), therefore, the former corresponds to the most-probable error and has to be returned by the decoder. In contrary, the weight of the cumulative correction calculated according to Eq.~\eqref{eq:C-projected} is smaller for configuration~(b). Therefore, executing the post-processing step on the cumulative correction in this case transforms the most likely configuration into the less likely one, which increases the logical failure rate. 

We aim to reduce the weight of the full correction operator $\mathcal{C}_{\textrm{3D}}$, which can be achieved by applying post-processing within each of $d$ measurement cycles. That is, for each time slice in $\mathcal{C}_{\textrm{3D}}$, the post-processing step is executed independently, which reduces the weight of the correction within each time slice, and, consequently, the total weight of the spacetime correction $\mathcal{C}_{\textrm{3D}}$. In the example of Fig.~\ref{fig:3d-post-processing}, this process transforms the higher-weight configuration of panel~(b) into the shortest weight configuration of panel~(a). Finally, we apply the correction to the set of qubits calculated in Eq.~\eqref{eq:C-projected}. 

Post-processing under faulty measurements requires execution of Algorithm~\ref{alg:pp-ideal} for each of $d$ time slices, yielding the worst-case complexity $O(d^2)$. Since the complexity of the matching step scales as $O(d^9)$ for MWPM, and as $O(d^7)$ when ISM is in use, adding post-processing has very little effect on the full decoding complexity. 

\subsection{Finite-size codes under noisy measurements}\label{sec:finite-size-ft}

We numerically benchmark the decoder under noisy measurements with and without post-processing. As per standard simulation methods with noisy measurements~\cite{qecsim}, we assume the last round of measurements is ideal to guarantee that the state is returned to the codespace after all spacetime defects have been matched. Figure~\ref{fig:LER-FT}~(Fig.~\ref{fig:si-comparison-1}) demonstrates the logical failure rate of the code versus the physical error probability for the case of the MWPM~(ISM) algorithm used during the matching step. While for MWPM, post-processing has little effect on the decoding capabilities, improvement is significant when ISM is used for decoding, both in terms of the threshold and sub-threshold scaling. As such, Figure~\ref{fig:thresholds-ft} shows the fault-tolerant thresholds as a function of the code size. As in the case of ideal measurement, the threshold increases monotonically with the code distance $d$ for any matching algorithm, both with and without post-processing. For ISM, post-processing noticeably improves the threshold value almost to what is achievable with full MWPM. The effect is especially prominent for moderately small codes.

Similarly, post-processing significantly improves the decoding capabilities of the ISM decoder in the sub-threshold regime. As shown in Fig.~\ref{fig:ft-benchmarks}(a), post-processing yields exponential suppression of logical error rates compared to bare ISM results as we decrease the probability of physical errors below the threshold. A similar exponential suppression also occurs as the code distance is increased in the sub-threshold regime, as demonstrated in Fig.~\ref{fig:ft-benchmarks}(b).

\section{Runtime and parallelization} \label{sec:complexity}

In the previous sections, we briefly discussed the complexity of the two-step decoder. Here, we provide a more detailed runtime analysis for various regimes and configurations of the decoder. 

\subsection{Runtime under ideal measurements}

Let us consider the case of ideal measurements. In Sec.~\ref{sec:matching-step}, we introduced two algorithms for the matching step. MWPM finds the configuration with the smallest perimeter among all clusters compatible with the syndrome in time cubic in the number of nodes $N_{\textrm{d}}$ in the graph. For the worst-case scenario with $N_{\textrm{d}} = O(d^2)$, this yields a runtime scaling as $O(d^6)$. Alternatively, ISM finds the shortest path within each of $d$ symmetries independently and hence can be parallelized. Since matching in a single repetition code can be done in time scaling as $O(d)$, matching the entire syndrome also scales as $O(d)$ when executed on $d$ nodes in parallel.

The post-processing described by Algorithm~\ref{alg:pp-ideal} requires the calculation of the overlap between the correction set and each of 1-lines, $|\mathcal{C} \cap L_1^i|$, which can be executed in $O(d)$ steps using $d$ nodes in parallel. Updating the correction set as $\mathcal{C} \triangle L_1^m$ is a modulo-2 summation that also takes $O(d)$ steps. Algorithm~\ref{alg:pp-ideal} executes the two operations until convergence, that is, until the overlap $|\mathcal{C} \cap L_1^i|$ is shorter than half of the number of qubits along all 1-lines. This potentially increases the complexity of the algorithm. However, our numerical simulations show that in the sub-threshold regime, only a small number of such cycles has to be executed on average. As such, Fig.~\ref{fig:number-of-cycles} shows an average number of cycles required for the convergence as a function of code distance. For a fixed sub-threshold error probability $p$, the number of cycles monotonically reduces after the initial growth. As follows from the figure, one would require less than one cycle of post-processing on average for a realistic setting, that is, when the physical error rate is well below 50\%. We can therefore upper-bound the runtime of the post-processing algorithm by $O(d)$. Hence, when executed on $d$ computational nodes in parallel, the full two-step decoder has a runtime scaling as $O(d)$ when ISM is used during the first step, and as $O(d^6)$ when MWPM is used. 

\begin{figure}[t]
\includegraphics[width=0.95\columnwidth]{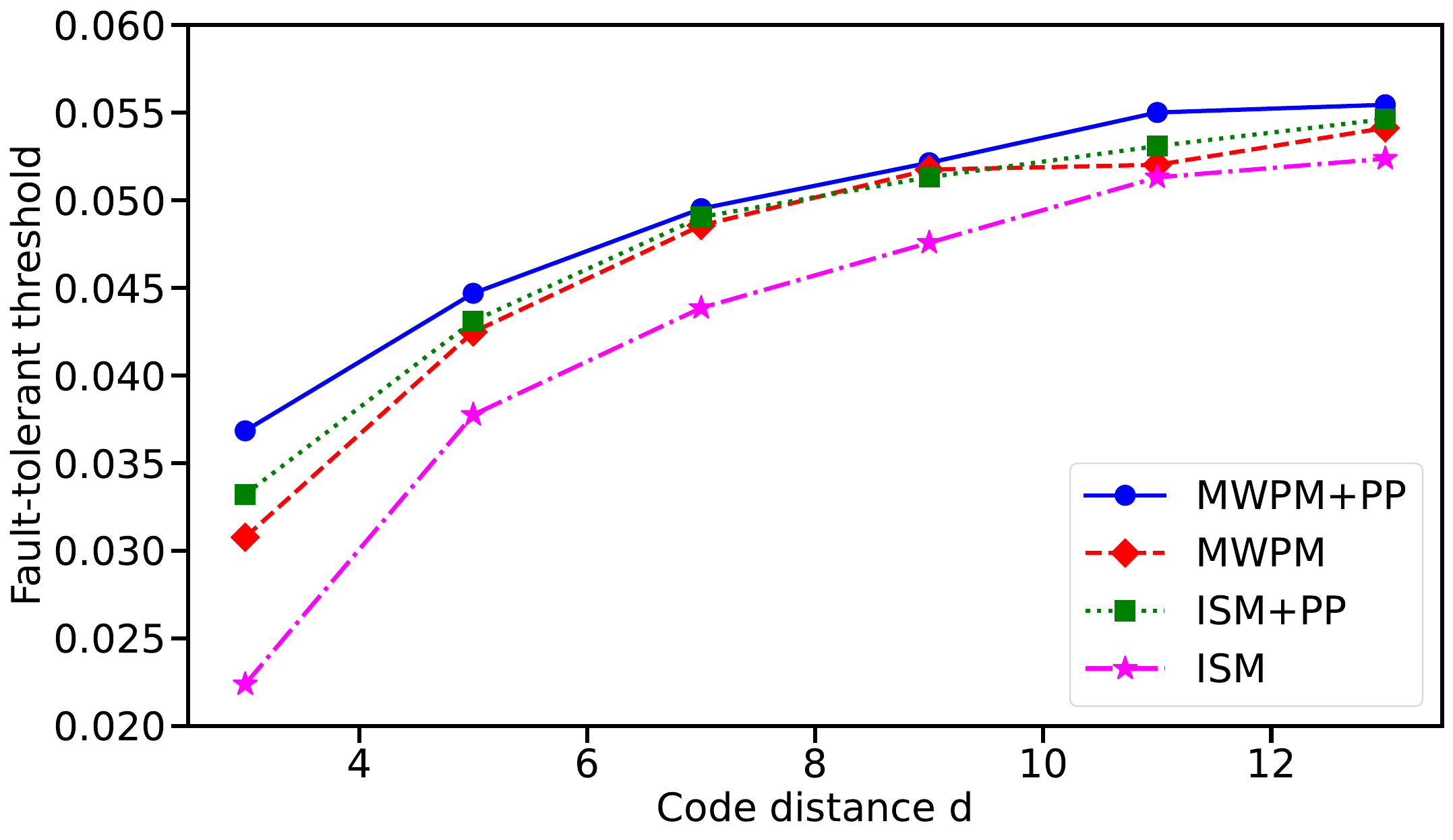}
\caption{\label{fig:thresholds-ft} Fault-tolerant threshold of the parity code vs code distance $d$.}
\end{figure}

\subsection{Runtime under faulty measurements}

Under faulty measurements, the syndrome graph contains $N_{\textrm{d}} = O(d^3)$ vertices and can be matched using MWPM in $O(N_{\textrm{d}}^3) = O(d^9)$ steps. With ISM, the MWPM algorithm is executed independently for each of $d$ graphs representing a $(1+1)$-dimensional symmetry. This can be executed in parallel on $d$ nodes with a runtime scaling of $O(d^6)$. 

Post-processing can as well be applied in parallel within each of $d$ time slices, yielding the runtime of the post-processing step scaling as $O(d^2)$ when executed in parallel on $d$ computational nodes. Furthermore, as discussed in the previous section, post-processing within each time slice can be further parallelized and executed in $O(d)$ on $d$ nodes. This yields a post-processing runtime of $O(d)$ when $d^2$ computational nodes are used. Hence, for any choice of matching algorithms considered in this work, the runtime of the entire decoder is dominated by the matching step, while executing the post-processing step has vanishingly small effect on the decoder runtime for large code distances $d$.

We note that there exist matching algorithms whose runtime scales more favorably than that of MWPM. As such, the worst-case runtime of the Union-Find~(UF) decoding algorithm scales almost-linearly in the number of nodes~\cite{Delfosse_2021}. Applied to the entire $(2+1)$ dimensional graph, this yields the complexity $O(d^3\alpha(d^3)))$ with $\alpha(d^3) \leq 3$ for practical cases. Implementing UF within each of $(1+1)$ dimensional symmetries in parallel, the first step of the decoder can be executed in $O(d^2\alpha(d^2)))$. Since the post-processing step also runs in $O(d^2)$---or even in $O(d)$ when executed on $d^2$ nodes---the worst-case runtime of the entire decoder scales almost linearly in the code size $n \approx d^2/2$.

\begin{figure}[t]
\includegraphics[width=0.95\columnwidth]{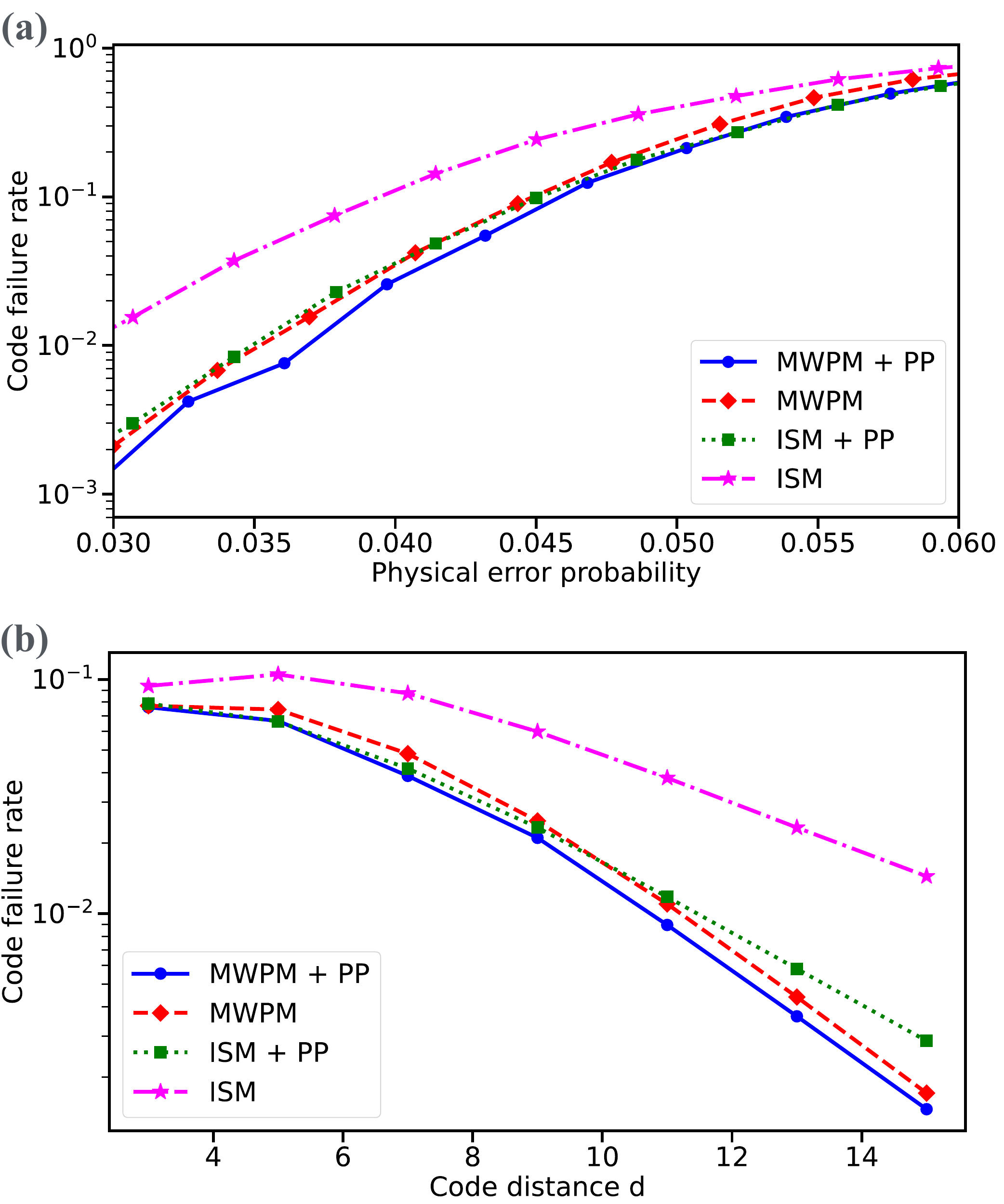}
\caption{\label{fig:ft-benchmarks} Effect of post-processing on logical failure rates for different configurations of the decoder. Panel~(a) shows logical failure rates of the code as a function of the physical error probability $p$ below the threshold at distance $d=15$. In panel~(b), the physical error rate is fixed at $p=3\%$. We observe little effect of post-processing when used in conjunction with MWPM. In contrary, when ISM is used for matching, post-processing allows to suppress logical failure rate exponentially as we increase $d$ or decrease $p$ below the threshold.}
\end{figure}
\subsection{Average-case runtime}

So far we have discussed the complexity in the worst-case scenario, with $O(N_{\textrm{d}}) = O(d^3)$ spacetime stabilizers being flipped in the setting with faulty measurements. However, in practical cases, one is typically interested in performing quantum computation below the fault-tolerant threshold. In this regime, errors and, subsequently, spacetime defects occur with small probability, yielding a sparse spacetime syndrome graph. Such graphs can be efficiently decoded using fast implementations of the MWPM algorihtm, such as sparse blossom~\cite{Higgott_2025} and fusion blossom~\cite{10313859}. In particular, the former runs in $O(N_{\textrm{d}}^{1.17})$ for a graph with $N_{\textrm{d}}$ nodes when a circuit-level noise of strength $0.1\%$ is assumed. Hence, well below the threshold, the modern software implementations of the MWPM algorithm scale almost linearly with the number of nodes. As an example, sparse blossom decodes a distance-17 surface code circuit in under one
microsecond per round on a single core, matching the rate at which syndrome data is generated on a superconducting quantum processor. Note that a full matching graph of a distance-$d$ surface code contains approximately as many vertices as a graph corresponding to one spacetime symmetry of distance-$d^{3/2}$ parity code. Hence, when executing ISM with sparse blossom algorithm for each spacetime symmetry in parallel under the same noise model, we can expect the parity code with distance $d \approx 17^{3/2} \approx 70$ to be decodable in approximately one microsecond. 

Finally, we note that the first step of the decoder does not necessarily have to be based on matching. There exist a plethora of other, non-matching, decoding algorithms, such as belief propagation~\cite{wolanski2025ambiguityclusteringaccurateefficient} or approximate maximum likelihood~\cite{PhysRevA.90.032326,PhysRevX.9.041031}. Any algorithm returning a spacetime recovery similar to the one of Fig.~\ref{fig:3d-matching}(a) can be used in conjunction with post-processing. We leave the construction of alternative decoders to future work.

\begin{figure}[t]
\includegraphics[width=0.92 \columnwidth]{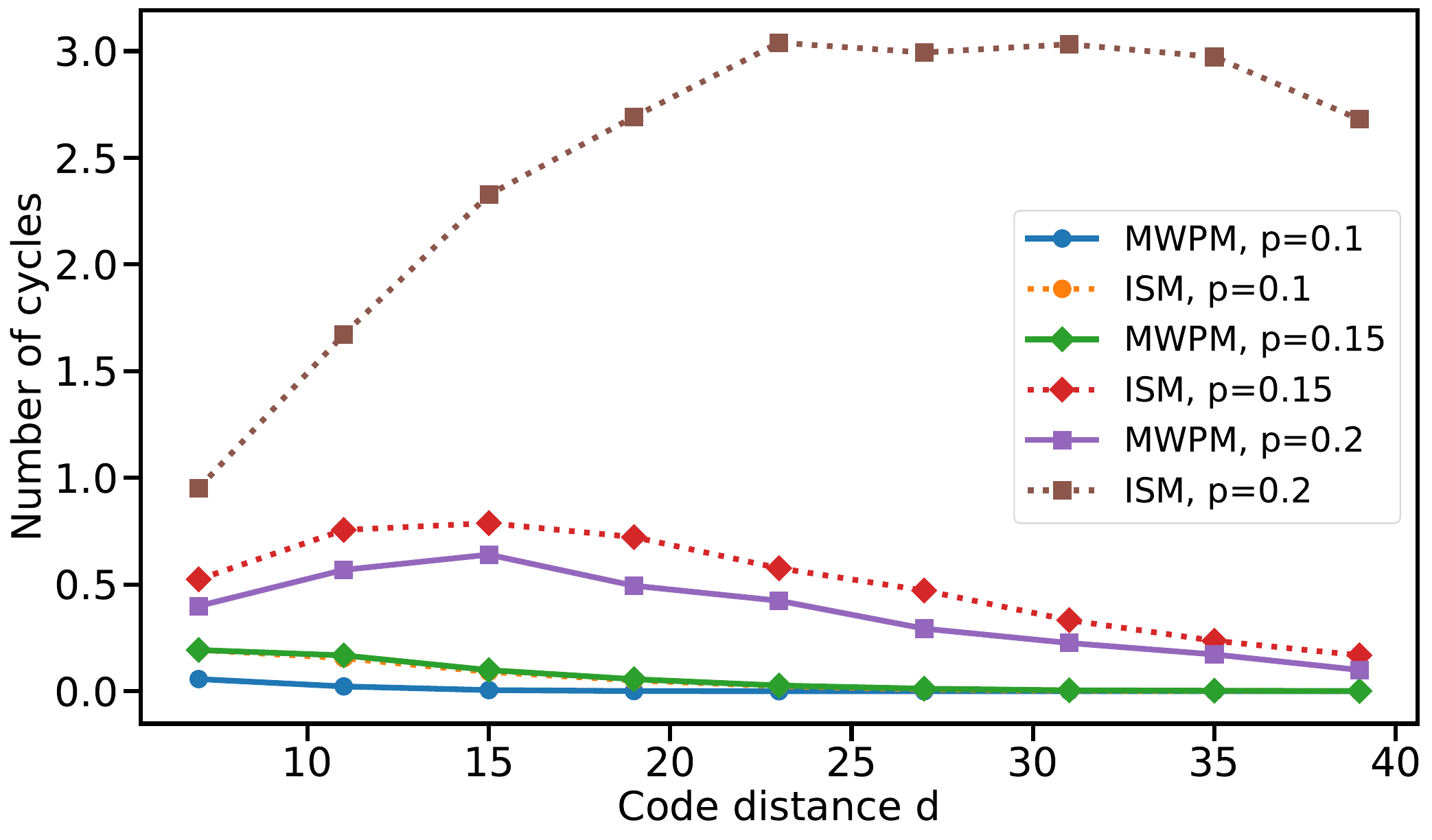}
\caption{\label{fig:number-of-cycles} Average number of post-processing cycles needed for convergence, i.e., the average number of times the while-loop in algorithm~\ref{alg:pp-ideal} is executed before the correction is shorter than half of the qubits long along all 1-lines. For a fixed physical error probability $p$ below the threshold, the number of cycles monotonically decreases after the initial growth with distance $d$. For $d$ large enough, less than one cycle is required on average to achieve convergence.}
\end{figure}

\section{Conclusion} \label{sec:outlook}

In this paper, we have investigated symmetrical and statistical properties of the parity code. We have shown that the former enables a symmetry-based matching decoder, while the latter implies a 50\% error threshold. Combining these properties, we have constructed a two-step decoder that performs near-optimally in the setting with ideal measurements and translates well to a more realistic setting with noisy measurements. The post-processing step itself takes the form of independent repetition codes and, furthermore, allows one to reduce the matching step to a series of repetition codes. Hence, implementation of our two-step decoder is equivalent to solving a series of independent repetition codes and can be executed in parallel. This results in dramatic improvement of runtime scaling compared to an implementation of matching algorithms on a full spacetime graph.

Our approach is reminiscent of the matrix product states~(MPS) decoder of Bravyi, Suchara, and Vargo~\cite{PhysRevA.90.032326}, implemented efficiently in conjunction with surface codes in Refs.~\cite{PhysRevLett.120.050505,PhysRevX.9.041031}. In the first step, the MPS decoding finds any correction operator that returns the state to the codespace. In the second step, the decoder transforms the correction operator into the approximate maximum likelihood one. The accuracy of the decoder can be improved by increasing bond dimension $\chi$, which at the same time results in a rapid increase of complexity as $O(\chi^3)$. The algorithm becomes exact if $\chi$ is exponentially large in $d$. Similarly, the decoding accuracy of our post-processing can be improved by considering $k$-lines with $k>1$ at computational cost growing exponentially with $k$. The algorithm becomes exact if $k=d$, i.e., when the number of repetition codes to be solved is exponentially large in $d$. 

Finally, we note that the post-processing introduced in this paper is not exclusive to the parity code, but is applicable across a variety of classical LDPC codes or to non-degenerate quantum codes, such as the surface code under infinite bias~\cite{PhysRevLett.124.130501}. We leave the study of post-processing algorithms applied to other codes to future work.

\section{Summary of algorithms and code availability} \label{sec:algorithms}

Here we summarize the algorithms used in numerical simulations in this work.
\begin{enumerate}
    \item Algorithm~\ref{alg:pp-ideal} describes post-processing under ideal measurements
    \item Algorithm~\ref{alg:mwpm-2d} describes the matching step for the case of ideal measurements when MWPM is used
    \item Algorithm~\ref{alg:ism-2d} describes the matching step for the case of ideal measurements when ISM is used
    \item Algorithm~\ref{alg:mwpm-3d} describes the matching step for the case of faulty measurements when MWPM is used
    \item Algorithm~\ref{alg:ism-3d} describes the matching step for the case of faulty measurements when ISM is used
    \item Algorithm~\ref{alg:clusters-from-matching} describes the process of finding clusters from matching in both ideal and noisy measurements cases
    \item Algorithm~\ref{alg:correction-2d} describes the process of finding the correction operator from matching under ideal measurements
    \item Algorithm~\ref{alg:correction-3d} describes the process of finding the correction operator from matching under noisy measurements
    \item Algorithm~\ref{alg:distance-3d} describes calculation of a distance metrics in the noisy-measurement case
\end{enumerate}

The code used for simulating the performance of different decoders is open source and, along with the simulation data, available at~\cite{parity-decoder-repo}.
Our implementation is based on Qecsim Python package~\cite{qecsim} for simulating stabilizer QECCs. To accelerate the simulations, we add the support of Pymatching v2~\cite{Higgott_2025}. 

\begin{acknowledgments}
This study was supported by the Austrian Research Promotion Agency (FFG Project No. FO999924030, FFG Basisprogramm) and by NextGenerationEU via FFG and Quantum Austria (FFG Project No.~FO999896208). 
This research was funded in part by the Austrian Science Fund (FWF) under Grant-DOI 10.55776/F71.
This project was funded within the QuantERA II Programme that has received funding from the European Union’s Horizon 2020 research and innovation programme under Grant Agreement No. 101017733. This publication has received funding under Horizon Europe Programme HORIZON-CL4-2022-QUANTUM-02-SGA via the project 101113690 (PASQuanS2.1). For the purpose of open access, the author has applied a CC BY public copyright license to any Author Accepted Manuscript version arising from this submission.
\end{acknowledgments}

\bibliographystyle{apsrev4-1}
\bibliography{reflist}

\clearpage
\onecolumngrid
\appendix

\section{MWPM under ideal measurements} \label{app:mpwm-graph-ideal}

A syndrome in the bulk of the parity code obeys the same dynamics as a syndrome of the $XZ$ surface code under $Y$ errors. Hence, we construct the full MWPM graph analogously to the matching graph of Ref.~\cite{PhysRevLett.124.130501}. For each measured defect of syndrome $S$, we add two nodes to graph $G$, one for each symmetry a given defect belongs to. We also add two nodes for each virtual stabilizer of the code. One of these nodes belongs to a real symmetry 1 to $d$~[shown in Figs.~\ref{fig:parity-code}(b)--(f)], and the other one belongs to the virtual symmetry, which we label symmetry 0~[shown in Figs.~\ref{fig:parity-code}(g)]. These two nodes corresponding to the same virtual stabilizer are connected by a weight-zero edge. Other nodes are connected only if they belong to the same symmetry. The edge between such nodes is weighted by a geometrical distance along the symmetry. The matching algorithm is summarized below.

\begin{algorithm}[H]\label{alg:mwpm-2d}
    \caption{MWPM, ideal measurements}
    \KwIn{Syndrome $S$, error probability $p$}
    \KwOut{Matching $M$}
    
    Initialize graph $G$\;
    \For{each defect in $S$}{
        Add two nodes corresponding to two symmetries of a defect to $G$\;
    }
    \For{each virtual vertex}{
        Add two virtual nodes corresponding to a virtual and a real symmetry to $G$\;
        Connect two nodes of one virtual stabilizer with a zero-weight edge\;
    }    
    \For{each node pair in $G$ belonging to one symmetry}{
        Add edge weighted by $\text{distance}(\text{node pair},p)$\;
    }
    Find matching $M$ from $G$ using MWPM\;
    \Return Matching $M$\;
\end{algorithm}

Matching $M$ is then used to construct the correction operator $C$, as explained in Appendix~\ref{app:clusters-from-matching}.

\clearpage
\section{ISM under ideal measurements} \label{app:repetition-graph-ideal}

For ISM, defects are first matched along one-dimensional symmetries. The matching problem hence reduces to a series of independent repetition codes, with graph $G$ constructed and solved independently for each symmetry. Each of these repetition codes might activate virtual stabilizers at the end of the symmetry line. In the last step, we construct another graph $G$ of nodes corresponding to virtual stabilizers activated during the real symmetry matching. Combined together, 1D matchings along each symmetry constitute 2D matching $M$. The matching algorithm is summarized below.

\begin{algorithm}[H]\label{alg:ism-2d}
    \caption{ISM, ideal measurements}
    \KwIn{Syndrome $S$, error probability $p$}
    \KwOut{Matching $M$}

    Initialize matching $M$\;
    \tcc{Do matching along each real symmetry line and add to list $M$}
    \For{each symmetry $\textrm{Sym}$}{
        Initialize graph $G$\;
        Add nodes corresponding to defects in $S \cap \textit{Sym}$ to $G$\;
        Add one virtual node for each of two virtual stabilizers of \textit{Sym} to $G$\;
        \If{$|S \cap \textit{Sym}| \mod{2} = 1 $}{
        Add one extra virtual node to $G$\;
        }
        \For{each next-neighbor node pair in $G$}{
        Add edge weighted by $\text{distance}(\text{node pair},p)$\;
        }
        Add zero-weighted edges between all virtual nodes\;
        Find matching $m$ from $G$ using MWPM and add $m$ to $M$\; 
    }
    \tcc{Do matching along virtual symmetry line and add to list $M$}
    Initialize graph $G$\;
    \For{each virtual node $\alpha$ in $M$}
    {
        Add $\alpha$ to $G$\;
    }
    \For{each next-neighbor node pair in $G$}{
    Add edge weighted by geometrical distance along the symmetry;
    }
    Find matching $m$ from $G$ using MWPM and add $m$ to $M$\; 
    \Return Matching $M$\;
\end{algorithm}

Matching $M$ is then used to construct the correction operator $C$, as explained in Appendix~\ref{app:clusters-from-matching}.

\clearpage

\section{MWPM vs ISM} \label{app:mwpm-vs-independent}

In Sec.~\ref{sec:matching-step}, we introduced two matching algorithms: MWPM and ISM. Without post-processing, we observe significantly higher logical error rates when the latter one is used during the first step of the decoder. This is due to the fact that ISM only seeks for local minima for each real symmetry, but does not take into account the weight of matching along the virtual symmetry. Unlike MWPM, this does not guarantee minimization of the total perimeter length. Figure~\ref{fig:app-mwpm-vs-repetition-example} shows an example of a syndrome where the two matching algorithms result in different solutions.

\begin{figure*}[h]
\includegraphics[width=0.75\textwidth]{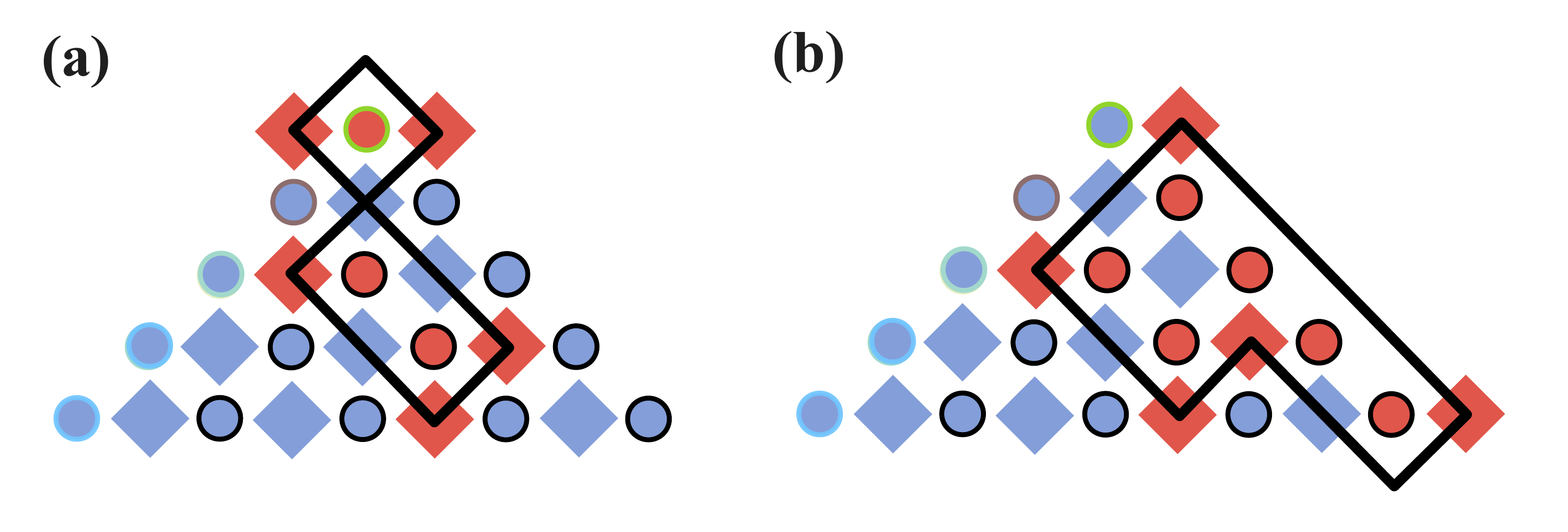}
\caption{\label{fig:app-mwpm-vs-repetition-example} Difference between decoding results of MWPM~(a) and ISM~(b) without post-processing. Since the latter searches for a configuration that minimized only the total length along the real symmetries, it can output a configuration with larger total perimeter than in~(a).}
\end{figure*}

\section{Importance of post-processing} \label{app:need-of-post-processing}

In Fig.~\ref{fig:post-processing}, we provided a toy example, where two possible solutions found by the MWPM algorithm yield identical cluster perimeter, yet different weights of the correction operator. We showed that post-processing allows to lift the ambiguity and find the smallest-weight solution among the two. Figure~\ref{fig:ambiguity-1} shows an example, where the lower-perimeter solution yields a higher-weight correction operator. Again, post-processing allows to find the correct solution in this case. Hence, post-processing can work even in cases where matching alone is guaranteed to fail, providing an even stronger motivation for the need of post-processing.

\begin{figure*}[h]
\includegraphics[width=0.75\textwidth]{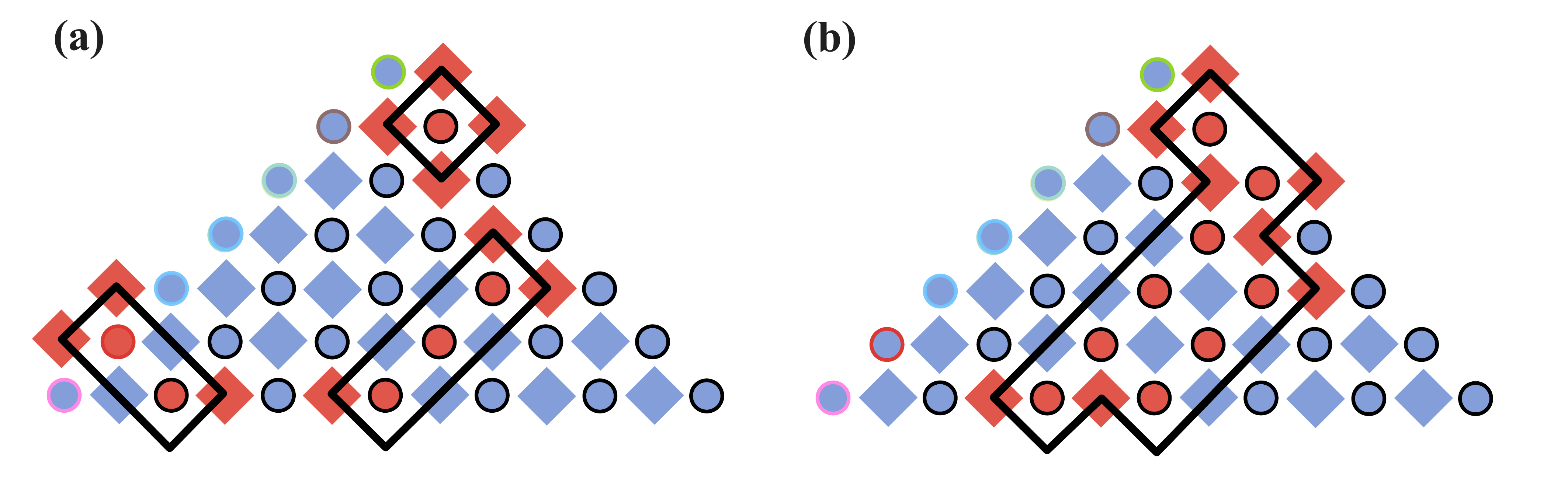}
\caption{\label{fig:ambiguity-1} Two matching solutions corresponding to the same syndrome in distance $d=7$ parity code. Configuration in~(b) has a smaller total cluster perimeter, however, results in a higher-weight correction operator. MWPM alone is hence guaranteed to fail in this case. Applying post-processing transforms correction operator in panel~(b) to the lowest-weight operator of panel~(a).}
\end{figure*}

\clearpage
\section{MWPM under noisy measurements} \label{app:mwpm-graph-3d}

Here, we provide an algorithm for constructing a MWPM graph in $(2+1)$ dimensions. Construction of the algorithm is similar to the case of ideal measurements. Defects of syndrome $S$ are labelled by time step and spatial coordinate and connected by edge weighed according to a distance metrics defined in Appendix~\ref{app:distance-3d}.

\begin{algorithm}[H]\label{alg:mwpm-3d}
    \caption{MWPM, faulty measurements}
    \KwIn{Spacetime syndrome $S$, probability of physical qubit error $p$, probability of measurement error $q$}
    \KwOut{Matching  $M$}
    Initialize graph $G$\;
    \For{each defect in $S$}{
        Add two nodes corresponding to two symmetries of the defect to $G$\;
    }
    \For{each virtual stabilizer at each time step}{
        Add virtual nodes corresponding to a virtual and a real symmetries of the stabilizer to $G$\;
        Connect two nodes of each virtual stabilizer with a zero-weight edge\;
    }    
    \For{each node pair in $G$ belonging to one spacetime symmetry}{
        Add edge weighted by $\text{distance}(\text{node pair},p,q)$\;
    }   
    Find matching $M$ from $G$ using MWPM\;
    \Return Matching $M$\;
\end{algorithm}

Matching $M$ is then used to construct spacetime clusters, as explained in Appendix~\ref{app:clusters-from-matching}.

\clearpage
\section{ISM under noisy measurements} \label{app:repetition-graph-3d}

Here, we provide an algorithm for constructing an ISM graph in $(2+1)$ dimensions. Construction of the algorithm is similar to the case of ideal measurements. For each spacetime symmetry, we construct the graph $G$ by adding measured defects belonging to the symmetry. We also add one node for each virtual stabilizer of the symmetry at each timestep. In addition, if at a given timestep an odd number of defects are measured, we add an extra virtual node well outside of the code boundaries. All virtual nodes are connected by weight-zero edges. All other nodes are pairwise connected by edges with weights defined in Appendix~\ref{app:distance-3d}. After matching graphs have been constructed and solved for each symmetry, we repeat the same procedure for the virtual symmetry, with nodes determined by the ones activated during the real symmetries matching. The matching algorithm is summarized below.

\begin{algorithm}[H]\label{alg:ism-3d}
    \caption{ISM, faulty measurements}
    \KwIn{Spacetime syndrome $S$, probability of physical qubit error $p$, probability of measurement error $q$}
    \KwOut{Matching  $M$}
    Initialize matching $M$\;
    \tcc{Do matching along each real spacetime symmetry and add to list $M$}
    \For{each spacetime symmetry $\textrm{Sym}$}{
        Initialize graph $G$\;
        Add one node for each spacetime defect in $S \cap \textit{Sym}$ to $G$\;
        Add one node corresponding to each virtual spacetime stabilizer of \textit{Sym} to $G$\;
        \For{each timestep}{
            \If{$|S \cap \textit{Sym}| \mod{2} = 1 $ at timestep}
            {Add one extra virtual node at this timestep\;
            Add zero-weight edges between extra virtual nodes and virtual nodes of $\textit{Sym}$ at timestep to $G$\;
            }
        }
        \For{each node pair in $G$}{
        Add edge weighted by $\text{distance}(\text{node pair},p)$\;
        }
        Find matching $m$ from $G$ using MWPM and add $m$ to $M$\; 
    }
    \tcc{Do matching along virtual spacetime symmetry and add to list $M$}
    Initialize graph $G$\;
    \For{each virtual node $\alpha$ in $M$}
    {
        Add $\alpha$ to $G$\;
    }
    \For{each node pair in $G$}{
    Add edge weighted by $\text{distance}(\text{node pair},p)$\;
    }
    Find matching $m$ from $G$ using MWPM and add $m$ to $M$\; 
    \Return Matching $M$\;
\end{algorithm}

Matching $M$ is then used to construct spacetime clusters, as explained in Appendix~\ref{app:clusters-from-matching}.

\clearpage
\section{Additional data for Sec.~\ref{sec:finite-size-ft}} \label{app:data-for-ft-matching}

\begin{figure}[h]
\includegraphics[width=0.95\textwidth]{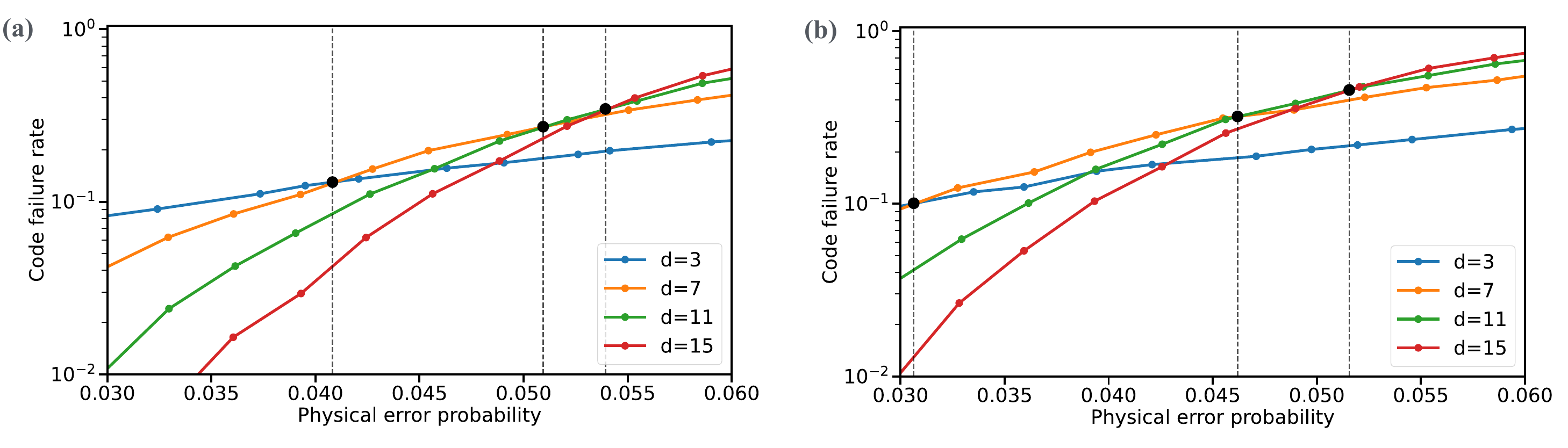}
\caption{\label{fig:si-comparison-1} Code failure rate versus physical error rate for different distances $d$. The measurement error rate is assumed identical to the data qubit rate. The data is shown for ISM (a)~with and (b)~without post-processing. Data for MWPM is shown in Fig.~\ref{fig:LER-FT} of the main text.}
\end{figure}

\clearpage
\section{Building clusters from matching} \label{app:clusters-from-matching}

Algorithm~\ref{alg:clusters-from-matching} describes the construction of clusters in both ideal and noisy measurement settings.

\begin{algorithm}[H]\label{alg:clusters-from-matching}
    \caption{Clusters from matching}
    \KwIn{Matching $M$}
    \KwOut{Set of clusters}
    Remove matches between virtual nodes with the same location from $M$\;
    \While{matching $M$ is not empty}{
        Initialize an empty cluster\;
        Let $(\alpha, \beta)$ be the first pair in $M$\;
        \Repeat{no more selections are possible
        }{
Select and remove vertical node pair $(\alpha, \beta)$ from matching\;
            Add defect corresponding to $\alpha$ to the cluster\;
            Select and remove horizontal node pair $(\beta, \gamma)$ from matching\;
            Add defect corresponding to $\beta$ to the cluster\;
            Let $\alpha \gets \gamma$\;        
        }
        Add cluster to the set of clusters\;
    }
    \Return Set of clusters\;
\end{algorithm}

The set of clusters returned by Algorithm~\ref{alg:clusters-from-matching} is used to construct the correction operator as explained in Sec.~\ref{app:correction-from-clusters-ideal} and Sec.~\ref{app:correction-from-clusters-ft} for ideal and noisy measurements, respectively.

\section{Finding correction from clusters, ideal measurements} \label{app:correction-from-clusters-ideal}

A correction operator is a tensor product of Pauli-$X$ operators applied to qubits enclosed by clusters. Note that multiple clusters can overlap. The correction is applied to qubit $i$ only if it belongs to an odd number of clusters; hence, modulo 2 addition in the algorithm.

\begin{algorithm}[H]\label{alg:correction-2d}
    \caption{Correction from clusters, ideal measurements}
    \KwIn{Set of clusters}
    \KwOut{Correction $C$, set of qubit indices to be corrected}
    Initialize an empty set $C$\;
    \For{each cluster in clusters}{
        Find interior of the cluster $c$ and add to correction modulo 2, $C \leftarrow C \oplus c$ ;
    }
    \If{post-processing}{
    Apply post-processing of Algorithm~\ref{alg:pp-ideal} to $C$
    }
    \Return Correction $C$\;
\end{algorithm}

\section{Finding correction from clusters, noisy measurements} \label{app:correction-from-clusters-ft}

In the noisy-measurement case, spacetime clusters contain information about both qubit and measurement errors. Without post-processing, we project clusters on one time slice. Note that the correction operator is modulo 2 addition of clusters overlap in both spatial~(as in Algorithm~\ref{alg:correction-2d}) and temporal dimensions. With post-processing, we apply Algorithm~\ref{alg:pp-ideal} within each time slice, hence reducing the total weight of the operator compatible with spacetime syndrome.

\begin{algorithm}[H]\label{alg:correction-3d}
    \caption{Correction from clusters, noisy measurements}
    \KwIn{Set of clusters}
    \KwOut{Correction $C$, set of qubits}
    \For{each cluster in clusters}{
        Remove clusters containing only measurement errors\;
    }
    
    \For{each cluster in clusters}{
            Find qubits labels $\{i\}$ enclosed by clusters within each timestep $t$ \;
            Add nodes $\{t,i\}$ to a spacetime correction set $C_{3D}$\;
    }
    \If{post-processing}{        
        \For{each timestep $t$}{
            Apply post-processing of Algorithm~\ref{alg:pp-ideal} to qubits within timestep $t$ of $C_{3D}$\;
        }
    }
    $C \leftarrow$ use $C_{3D}$ to calculate cumulative correction from Eq.~\eqref{eq:C-projected}\;
    \Return Correction $C$\;
\end{algorithm}

\section{Distance metrics, noisy measurements} \label{app:distance-3d}

An algorithm for calculating the distance metrics was introduced in Ref.~\cite{PhysRevLett.124.130501} and is summarized below.

\begin{algorithm}[H]\label{alg:distance-3d}
    \caption{Distance metrics calculation, noisy measurements}
    \KwIn{Nodes indices, probability of physical qubit error $p$, probability of measurement error $q$}
    \KwOut{Weight $W$}
    $\delta_t \gets$ minimum number of time steps between $\alpha$ and $\beta$\;
    $\delta_p \gets$ minimum number of parallel steps between $\alpha$ and $\beta$\;
    $\mu_p \gets \log{(p/(1-p))}$\;
    $\mu_t \gets \log{(q/(1-q))}$\;
    $W \gets \delta_p \mu_p + \delta_t \mu_t$\; 
    \Return Weight $W$\;
\end{algorithm}

\end{document}